\def\a{\alpha}
\def\b{\beta}
\def\g{\gamma}
\def\d{\delta}
\def\dd{\mathrm{d}}
\def\eps{\varepsilon}
\def\s{\sigma}
\def\D{\Delta}
\def\l{\lambda}
\def\L{\Lambda}
\def\C{\mathbb C}
\def\R{{\mathbb R}}
\def\N{\mathbb N}
\def\tr{{\rm tr}\,}
\def \Lp {\mathsf L}  
\newcommand{\NH}{Nevanlinna--Herglotz}
\newcommand{\1}
{{\,\vrule depth3pt height9pt}{\vrule depth3pt height9pt}
{\vrule depth3pt height9pt}{\vrule depth3pt height9pt}\,}
\NewDocumentCommand{\xnewtheorem}{m o m}
 {%
  \IfNoValueTF{#2}
   {\newtheorem{#1}{#3}}
   {%
    \newaliascnt{#1}{#2}%
    \newtheorem{#1}[#1]{#3}%
    \aliascntresetthe{#1}%
    \expandafter\newcommand\csname #1autorefname\endcsname{\makefirstuc  {\lowercase {#3}}}%
   }%
 }
\theoremstyle{plain}
\newtheorem{thmc}{ERROR}[section]
\theoremstyle{definition}
\theoremstyle{remark}
\theoremstyle{plain}
\theoremstyle{definition}
\theoremstyle{remark}
\newtheorem{remark}[thmc]{Remark}
\newtheorem{remarks}[thmc]{Remarks}
\newcommand{\be}{\begin{equation}}
\newcommand{\ee}{\end{equation}}
\newcommand{\bea}{\begin{eqnarray}}
\newcommand{\eea}{\end{eqnarray}}
\newcommand{\beax}{\begin{eqnarray*}}
\newcommand{\eeax}{\end{eqnarray*}}
\newcommand{\calI}{{\mathcal{I}}}
\newcommand{\Cpm}{\Pi_{\pm}}
\newcommand{\Cp}{\Pi_{+}}
\newcommand{\Cm}{\Pi_{-}}
\newcommand{\Renyi}{R\'{e}nyi}
\keywords{Entanglement entropy, free Dirac gas, Wiener--Hopf operators} 
\subjclass[2010]{Primary 47G30, 35S05; Secondary 45M05, 47B10, 47B35}
\numberwithin{equation}{section}
\begin{document}

\title[Entanglement entropy of massless Dirac fermions in dimension one]{Entanglement entropy in the ground state of non-interacting massless Dirac fermions in dimension one}

\date{April 10, 2024}
\author[F.~Ferro, P.~Pfeiffer, W.~Spitzer]{Fabrizio Ferro, Paul Pfeiffer and Wolfgang Spitzer}
\address{Fakult\"at f\"ur Mathematik und Informatik, FernUniversit\"at in Hagen, Universit\"atsstra\ss e 1, 58097 Hagen, Germany} \email{fabrizio.ferro@studium.fernuni-hagen.de}  \email{paul.pfeiffer@fernuni-hagen.de} \email{wolfgang.spitzer@fernuni-hagen.de}

\begin{abstract} We present a novel proof of a formula of Casini and Huerta for the entanglement entropy of the ground state of non-interacting massless Dirac fermions in dimension one localized to (a union of) intervals and generalize it to the case of R\'enyi entropies. At first, we  prove that these entropies are well-defined for non-intersecting intervals. This is accomplished by an inequality of Alexander V.~Sobolev. Then we compute this entropy using a trace formula for Wiener--Hopf operators by Harold Widom. For intersecting intervals, we discuss an extended entropy formula of Casini and Huerta and support this with a proof for polynomial test functions (instead of entropy). 
\end{abstract}

\maketitle
\centerline{\emph{In memory of Mary-Beth Ruskai (1944--2023)}}
\bigskip

\section{Introduction}

We consider non-interacting, massless relativistic Dirac fermions on the real line. For simplicity, we may assume that these fermions are spinless. The single-particle Hamiltonian of this Fermi system is given by the momentum operator $-\mathrm{i}\,\dd/\dd x$, which acts self-adjointly on the Sobolev space, $\mathsf{H}^1(\R)$. The (pure) ground state of fermions is characterized by the (orthogonal) projection operator $P\coloneqq 1_{\R^+}(-\mathrm{i}\,\dd/\dd x)$ on the Hilbert space $\mathsf{L}^2(\R)$ of square-integrable, complex-valued functions on $\R$ with inner product $\langle\cdot,\cdot\rangle$, or equivalently, $P$ is the spectral projection of $-\mathrm{i}\,\dd/\dd x$ onto the positive (infinite) Fermi sea, $\R^+\coloneqq \{\xi\in\R:\xi\ge0\}$. Below in \eqref{def: Op}, this operator $P$ will be denoted by $\mathrm{Op}(1_{\R^+})$. It is related to the Hilbert transform $\mathrm{H}$ on $\mathsf{L}^2(\R)$ (see \cite[p. 26]{Stein}), which is given by the convolution with $x\mapsto 1/(\pi x)$, that is, $(\mathrm{H}\varphi)(x)\coloneqq \pi^{-1}\int_\R \dd y\, \varphi(y)/(x-y)$ for $x\in\R$ and $\varphi\in\mathsf{L}^2(\R)$, interpreted as a principal-value integral. The simple relation between $P$ and $\mathrm{H}$ is 
\be P = \frac12 \mathds{1} +\frac{1}{2}\mathrm{i}\,\mathrm{H}\,,
\ee
where $\mathds{1}$ is the identity operator on $\mathsf{L}^2(\R)$. Therefore, $P$ has the integral kernel
\begin{align} \label{kernel P}
P(x,y) = \frac{1}{2}\d(x-y) - \frac{1}{2\pi\mathrm{i}} \frac{1}{x-y}\,,\quad x,y\in\R\,.
\end{align}
As a pure state, $P$ has absolute entropy zero. Given some (bounded or possibly unbounded) Borel subset $\Lambda\subset\R$ we reduce this ground state locally to $\Lambda$ and define on $\mathsf{L}^2(\R)$ the operator 
\be \label{restricted}
P(\Lambda)\coloneqq 1_\L {P} 1_\L\,,
\ee 
where we use the same symbol, $1_\L$, for the indicator function of $\L$ and for the multiplication operator by $1_\L$ in position space. To this end, let $X$ be the space multiplication operator informally defined as $(X \varphi)(x)\coloneqq x \varphi(x)$, $x\in\R$, for suitable $\varphi\in\mathsf{L}^2(\R)$. Then we identify $1_\Lambda$ with $1_{\Lambda}(X)$. More generally, let $f\in\mathsf{L}^\infty(\R)$ be a bounded, measurable function on $\R$, then the bounded operator $f(X)$ on $\mathsf{L}^2(\R)$ is defined, as usual, by
\be \label{def f(X)}
\big(f(X)\varphi\big)(x)\coloneqq f(x)\varphi(x)\,,\quad x\in\R\,,\varphi\in\mathsf{L}^2(\R)\,.
\ee
The operator $P(\L)$ in \eqref{restricted} characterizes the ground state reduced locally to $\L$ in the sense that all its expectation values are determined by the inner products $\langle f,P(\L)g\rangle$, for $f,g\in\mathsf{L}^2(\R)$, and the Wick rule or quasi-free property to compute higher order correlation functions in terms of a determinant. See, for example \cite[Chapter 4]{HLS11}, \cite{LX18,Xu21} for details. For general $\L$, the operator $P(\L)$ is not a projection and no longer corresponds to a pure state of fermions but it satisfies $0\le P(\L)\le \mathds{1}$. Informally, we denote the absolute (von-Neumann) entropy of the ground state localized to $\L$ by $S(\L) \coloneqq \tr h_1(P(\L))$, see \eqref{entropy von Neumann} for the definition of $h_1$. However, for general $\L\subset \R$, the operator $h_1(P(\L))$ is positive but not trace class and thus $S(\L)$ is infinite. Therefore,  we consider certain relative entropies or entropy differences. Let $\L'\subset\R$ be another Borel subset disjoint from $\L$. Our goal is to understand a quantity which is informally the sum of the individual entropies $S(\L)+S(\L')$ minus the entropy $S(\L\cup\L')$ of the union. Below we denote this by $\tr\D(\L,\L';h_1)$ with the von-Neumann entropy function $h_1$ defined in \eqref{entropy von Neumann}, where
\be \label{Delta}
\D(\L,\L';f) \coloneqq f\big(P(\L)\big) + f\big(P(\L')\big) - f\big(P(\L\cup \L')\big)\,.
\ee
It turns out that (under certain assumptions on $\L$, $\L'$ and $f$, see \autoref{assump}) $\D(\L,\L';  f )$ is a trace-class operator. For $f=h_1$, it is a measure of entanglement of the ground state localized to $\L$ and $\L'$. We refer to this as the von-Neumann entanglement entropy and denote this by $\mathsf{EE}(\L,\L')$. Another frequently used and maybe more appropriate name for this entanglement entropy is mutual information. Casini and Huerta computed this entropy difference explicitly (for the case of a union of intervals and the von-Neumann entropy) and therefore we refer to this as the Casini--Huerta formula. Later, using the same integral representation for the von-Neumann entropy function, Longo and Xu (rigorously) proved this formula. More recently, using the Cauchy formula and the so-called replica-trick, Blanco et al.~\cite{Blanco2022} have computed the integer R\'enyi entanglement entropies, that is, $\tr \D(\L,\L';h_n)$ for $n\in\N$ with the R\'enyi entropy function $h_n$ defined in \eqref{entropy}.

In our approach proving the trace-class property boils down to showing that $1_\L \mathrm{Op}(1_{\R^+}) 1_{\L'}$ lies in the Schatten--von Neumann class $\mathcal S_p$ for any $p>0$ and  employing an inequality by A.V.~Sobolev that reduces the entropy difference to the $p$ (quasi-)norm of $1_\L \mathrm{Op}(1_{\R^+})1_{\L'}$.

We are not aware of a (rigorous) generalization of the approach by Casini--Huerta (and the method of proof of Longo--Xu) to all R\'enyi entropies. We do mention our generalization to R\'enyi entropies with R\'enyi index $\a<1$, see \autoref{section: master thesis}. Another drawback of Casini--Huerta's method is that it relies on the explicit knowledge of the resolvent of $P(\Lambda)$, which is usually not available explicitly and the main reason why there is no such simple formula in the ground state of massive Dirac fermions. See \cite{Xu23} for the latest developments on massive free Dirac fermions in one spatial dimension. At this point we also mention the mathematical progress on entanglement entropy for free Dirac fermions in higher spatial dimensions in the recent papers \cite{BM23,FLS23}.

Let us now explain the set-up and the results by Harold Widom \cite{Widom82} on the trace of certain Wiener--Hopf operators. To this end, let (the symbol) $a\in \Lp^\infty(\R)$ be real-valued. Then, $\mathrm{Op}(a) = a(-\mathrm{i}\,\dd/\dd x)$ is the operator acting on Schwartz functions $\varphi$ on $\R$ as
\be \label{def: Op}
\big(\mathrm{Op}(a) \varphi\big)(x) \coloneqq \frac{1}{2\pi} \int_\R \mathrm{d}y \int_\R \mathrm{d}\xi\,\exp\big(\mathrm{i}\xi(x-y)\big) \, a(\xi) \varphi(y) = \mathcal F^{-1}\big(a\mathcal F(\varphi)\big)(x) \,,\quad x\in\R\,,
\ee
where $\mathcal F$ (and $\mathcal F^{-1}$) is the (inverse) Fourier transformation with the convention
\begin{align*}\hat{\varphi}(\xi)\coloneqq(\mathcal F\varphi)(\xi)&\coloneqq \int_\R \dd x\, \varphi(x) \, \mathrm{e}^{-\mathrm{i}\xi x}\,,\quad \xi\in\R\,,
\\
(\mathcal F^{-1}\varphi)(x)&\coloneqq \frac{1}{2\pi}\int_\R \dd \xi\, \varphi(\xi) \, \mathrm{e}^{\mathrm{i}\xi x}\,,\quad x\in\R\,.
\end{align*}
The operator $\mathrm{Op}(a)$ can be extended uniquely to a bounded operator on $\mathsf{L}^2(\R)$. In this sense, $P = \mathrm{Op}(1_{\R^+}) = 1_{\R^+}(-\mathrm{i}\,\dd/\dd x)$ is understood. Let
\begin{align} \label{WH} 
W(a) \coloneqq {1}_{\R^+} \mathrm{Op}(a) {1}_{\R^+}
\end{align}
be the (truncated) Wiener--Hopf operator on $\Lp^2(\R)$. It can also be considered as an operator on $\mathsf{L}^2(\R^+)$. Widom assumed that the symbol $a$ satisfies (see \cite[(2)]{Widom82})
\[ \int_\R \dd \xi \, |\xi| \big|(\mathcal F^{-1}a)(\xi)\big|^2 <\infty .
\]
Then, under the conditions on the (test) function $f$ that  (i) $f\in \Lp^1(\R)$ and (ii) $\xi\mapsto \xi^2 \hat{f}(\xi)\in \Lp^1(\R)$, Widom proved (see \cite[Theorem 1b]{Widom82}) that the operator 
\be \label{def D(a;f)} D(a;f) \coloneqq f\big(W(a)\big) - W\big(f(a)\big)
\ee
is trace class and its trace is given by the formula
\be \label{Widom}
\tr D(a;f) = \frac{1}{8\pi^2} \int_{\R\times\R} \mathrm{d}\xi_1 \mathrm{d}\xi_2\, \frac{U\big(a(\xi_1),a(\xi_2);f\big)}{(\xi_1-\xi_2)^2} \eqqcolon \mathcal B(a;f)\,,
\ee
where
\be \label{def U}
U(\s_1,\s_2;f) \coloneqq \int_0^1 \mathrm{d}t \, \frac{f\big((1-t)\s_1+t\s_2\big) - (1-t)f(\s_1)-tf(\s_2)}{t(1-t)}\,,
\quad \s_1,\s_2\in\R\,.
\ee
For a symbol $a\in\mathsf{L}^\infty(\R)$, Widom introduced the Besov-norm $\interleave\cdot\interleave$ with  
\be \interleave a \interleave^2 \coloneqq \mathcal B\big(a;t\mapsto t(1-t)\big) = \frac{1}{8\pi^2} \int_{\R^2} \dd\xi_1\dd\xi_2\,\frac{\big(a(\xi_1)-a(\xi_2)\big)^2}{(\xi_1-\xi_2)^2}
\ee
and the vector space $\mathcal K\subset \mathsf{L}^\infty(\R)$ of those symbols for which this norm is finite. Then, for two symbols $a,b\in\mathcal K$, he defined the (Hankel) operator
\be \label{Hankel}
H(a,b)\coloneqq W(ab) - W(a)W(b)
\ee
and proved that it is trace class with the trace-norm estimate
\be \label{Hankel estimate}
\|H(a,b)\|_{\mathcal S_1} \le \interleave a\interleave\, \interleave b\interleave\,.
\ee
An important technical relation (see \cite[(13)]{Widom82}) that we will use for $a\in\mathcal K$ and $x\in\R$ is
\be \label{exp W}
W\big(\exp(\mathrm{i}x a)\big) = {1}_{\R^+} \exp\big(\mathrm{i}xW(a)\big) + \mathrm{i}\int_0^x \dd y\, H\big(\exp(\mathrm{i}y a),a\big)\, \exp\big(\mathrm{i}(x-y)W(a)\big)\,.
\ee
Because of misprints in \cite[(13)]{Widom82} and for the convenience of the reader, we quickly recall here Widom's derivation. We start from 
\begin{align*} \frac{\dd }{\dd y} &\Big(W(\exp(\mathrm{i} y a)\big) \exp\big(-\mathrm{i} y W(a)\big) \Big)
\\
&= \mathrm{i} W\big(a\exp(\mathrm{i} y a)\big) \exp\big(-\mathrm{i} y W(a)\big) - \mathrm{i} W\big(\exp(\mathrm{i} y a)\big) W(a)\exp\big(-\mathrm{i} y W(a)\big)
\\
&=\mathrm{i} H\big(\exp(\mathrm{i} y a),a\big) \exp\big(-\mathrm{i} y W(a)\big)\,.
\end{align*}
Integrating $y$ from 0 to $x$, we obtain
\begin{align*} W\big(\exp(\mathrm{i} x a)\big) \exp\big(-\mathrm{i} x W(a)\big)- W(1) = \mathrm{i} \int_0^x \dd y\, H\big(\exp(\mathrm{i} y a),a\big) \exp\big(-\mathrm{i} y W(a)\big)\,.
\end{align*}
Finally, we multiply from the right by $\exp\big(+\mathrm{i} x W(a)\big)$ and use that $W(1) = 1_{\R^+}$. This yields \eqref{exp W}.

It is not difficult to see (cf.~\cite[(8)]{Widom82}) that in this case,
\be \label{1.13}
\interleave\exp(\mathrm{i}x a)\interleave \le |x| \,\interleave a\interleave\,.
\ee

All this is not directly applicable to the functions $f$ that we have in mind, namely the R\'enyi entropy function $h_\a$ in combination with the symbol $a=1_\L$ for a subset $\L\subset \R$. The R\'enyi entropy function $h_\a\colon \R \to \R$ with index $\alpha>0$ is defined for $t\in(0,1)$ as
\begin{align} \label{entropy} h_\a(t)&\coloneqq \frac{1}{1-\a} \ln\big[t^\a +(1-t)^\a\big] \,,\quad \mbox{ if }\alpha\not=1\,,
\\
h_1(t)&\coloneqq\lim_{\alpha'\to1}h_{\a'}(t) = -t\ln(t)-(1-t)\ln(1-t)\,,\quad \mbox{ if }\alpha=1\,. \label{entropy von Neumann}
\end{align}
For $t \in\R\setminus (0,1)$, we set $h_{\a}(t)\coloneqq 0$. 

Notice that $\mathcal B(1_\L;f)$ is, in general, not well-defined. For instance, if $f(t)=t(1-t)$, then $U(\s_1,\s_2;f) = (\s_1-\s_2)^2/(8\pi^2)$ and the integral in $\mathcal B(1_\L;f)$ does not converge. The same holds with $f=h_\a$. In \cite{Sobolev16}, conditions on $a$ and $f$ were studied so that this coefficient $\mathcal B(a;f)$ is well-defined. If we smoothen the symbol $1_\L$ and denote the smooth symbol by $\varphi_{\eps}$, then $\mathcal B(\varphi_\eps;h_\a)$ is indeed well-defined, see \cite[Theorem 3.2]{Sobolev16}.

Similarly to $\D(\L,\L';f)$, we also introduce the operator difference
\be\label{D_operator} 
D(\L,\L';f) \coloneqq D(1_\L;f) + D(1_{\L'};f) - D(1_{\L\cup\L'};f)\,.
\ee
Comparing the definitions of the Wiener--Hopf operator $W(1_\L) = 1_{\R^+}(X) 1_\L(-\mathrm{i}\dd/\dd x) 1_{\R^+}(X)$ and the localized ground-state projection $P(\L) = 1_\L 1_{\R^+}(-\mathrm{i}\dd/\dd x) 1_\L$, we see that the role of space and Fourier-space variables is exhanged and the order of the projections is permuted. Therefore, $W(1_\L)$ and $P(\L)$ have the same non-zero eigenvalues $\l_i$ including multiplicities. We would then guess that the operator $D(\L,\L';f)$ is trace class if and only if $\D(\L,\L';f)$ is trace class with the same trace. We do not prove this here since we are foremost interested in the properties of $\D(\L,\L';f)$ for certain H\"older-continuous functions $f$ on $[0,1]$ and of $D(a;f)$ for smooth $a$ and $f$. We write $f$ in the form $f = f_{0,\d} + (f-f_{0,\d})$ with a certain function $f_{0,\d}\in\mathsf{C}^2_c(\R)$ so that the trace of $\D(\L,\L';f_{0,\d})$ is close to the trace of $\D(\L,\L';f)$, see \eqref{continuity}. The symbol $1_\L$ on the other hand is replaced by a smooth symbol $\varphi_\eps\in\mathsf{C}^\infty(\R)$ which tends to $1_\L$ pointwise. Then we will use Widom's formula for the computation of the trace of $D(\varphi_\eps^2;f)$ and perform the limit $\eps\to0$ to obtain the trace of $\D(\L,\L';f)$.

For $\b\in\N$ and an open subset $\mathcal O\subseteq\R$, we denote by $\mathsf{C}^\b(\mathcal O)$ the vector space of $\b$-times differentiable, complex-valued functions on $\mathcal O$, whose derivative of the order $\b$ is continuous on $\widebar{\mathcal O}$. By $\mathsf{C}^0(\mathcal O)\coloneqq\mathsf{C}(\mathcal O)$, we understand the vector space of continuous, complex-valued functions on $\widebar{O}$ and by $\mathsf{C}^\infty(\mathcal O)\subset\mathsf{C}(\mathcal O)$ we mean the subspace of arbitrarily often differentiable functions on $\mathcal O$. We say that a function $f$ on $\mathcal O$ is $\mathsf{C}^\b$-smooth if $f\in\mathsf{C}^\b(\mathcal O)$. If a function $f \in\mathsf{C}^\b(\mathcal O)$, for some $\b\in\N_0\cup\{\infty\}$, is compactly supported with support inside $\mathcal O$, then we indicate this by adding the lower index $c$ to these function spaces and write $f\in\mathsf{C}^\b_c(\mathcal O)$.

We further assume the subsets $\L$ and $\L'$ to be intervals and use the letters $I_1$ and $I_2$ for these sets instead, or the letters $\mathcal I_1$ and $\mathcal I_2$ for a finite union of intervals. 

Let us introduce the norm $\| f\|_{y,\g}$ for $f \in \mathsf{C}(\R)$, where $y\in\R$ and $\gamma \in [0,1]$. Then, if $f\in \mathsf{C}^2(\R\setminus \{y\})$, we set
\begin{align}\label{def norm}
\| f\|_{y,\gamma} \coloneqq  \max_{0 \le k \le 2} \sup  \big\{|f^{(k)}(x)| \lvert x-y\rvert^{-\gamma+k} : x \in \R \setminus\{ y\}\big\} \, . 
\end{align} 
Sobolev uses the notation $\1 f\1_2$ for this norm, see \cite{Sobolev17}. As we will vary the parameters $y$ and $\g$ throughout this proof, but not the parameter $n=2$ (maximum degree $k$ of differentiation), we choose the notation $\|f \|_{y,\gamma}$.

Here are our assumptions on the intervals and on the test function $f$. 
\begin{assumptions} \label{assump}
\begin{enumerate}
\item The intervals $I_1\coloneqq (a_1,b_1)$ and $I_2\coloneqq(a_2,b_2)$ have disjoint closures and $I_1$ is bounded. 
\item  $f\in \mathsf{C}_{c}(\R)$ and satisfies $f(0)=0$.
\item There is a finite set $\mathcal X = \{x_1,x_2,\ldots, x_n\}\subset [0,1]$ of points with $x_1=0,x_n=1$ and a $\gamma \in (0,1]$ such that $f$ can be written as a sum $f=\sum_{i=1}^nf_i$ with $f_i\in \mathsf{C}_c\big((x_i-2,x_i+2)\big)$ so that $\| f_i \|_{x_i,\g} < \infty$. 
\end{enumerate}
\end{assumptions}

\begin{remarks}\label{remarks Renyi}
\begin{enumerate}
\item The result is symmetric in the intervals $I_1,I_2$. We only choose to assume that $I_1$ is bounded to simplify one estimate.
\item The assumption that the functions $f$ and $f_i$ are compactly supported is actually not relevant, as only the values of $f$ (or $f_i$) on the interval $[0,1]$ affect the operator $\D(\L,\L';f)$ defined in \eqref{Delta} since $0\le P(\L)\le \mathds{1}$.
 
\item The condition $\| f\|_{y,\g}< \infty$ with $\g>0$ implies $f(y)=0$. Thus, we have $f_i(x_i)=0$. However, as $f_i(x_j)$ for $i\not=j$ is not required to vanish, we can achieve $f(x_i)\not= 0$ by choosing the functions $f_i$ appropriately. In particular, any $\mathsf{C}^2$-function $f$, which is supported inside $[-1,2]$ also satisfies this assumption for $\mathcal X=\{0,1\}$ and $\gamma=1$, as we shall construct now. Let $\zeta \in \mathsf{C}^\infty_c\big((-1/2,1/2)\big)$, $0\le\zeta\le 1$ with $\zeta=1$ on $(-1/4,1/4)$. We choose $f_1(t) \coloneqq f(t)-\zeta(t)f(0)$ and $f_2(t)\coloneqq \zeta(t) f(0)$, which ensures $f_1(0)=f_2(1)=0$.

\item The R\'enyi entropy function $h_\a$ satisfies the last conditions. To show that, we set $\mathcal X \coloneqq \{0,1\}$ and for all $\a <1$ we choose $\g=\a$, for $\a=1$ we may take any $\g<1$, and for $\alpha>1$, we choose $\gamma=1$. As $h_\a(0)=h_\a(1)=0$, we will just use a smooth partition of unity to construct $f_1$ and $f_2$. For $t \in \R$, we set $f_1(t) \coloneqq h_\a(t) \zeta(t)$ and $f_2(t) \coloneqq h_\a(t) (1-\zeta(t))$. 
\end{enumerate}
\end{remarks}

Our main result is 
\begin{thm}\label{main thm}
Suppose that the intervals $I_1 = (a_1,b_1), I_2=(a_2,b_2)$ and the function $f$ satisfy \autoref{assump}. Then, the operator $\D(I_1,I_2;f)$ is trace class and
\begin{align} \label{formula}
\tr \D(I_1,I_2;f) &=\frac{U(0,1;f)}{2\pi^2} \ln\Big[\frac{(a_2-a_1)(b_2-b_1)}{(a_2-b_1)(b_2-a_1)}\Big]\,.
\end{align}
\end{thm}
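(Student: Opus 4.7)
The plan is to combine A.~V.~Sobolev's quasi-norm inequality with Widom's identity \eqref{Widom} after a twofold smoothing of both the symbol $1_\Lambda$ and the test function $f$. For the trace-class statement, by assumption (iii) the problem decomposes into pieces controlled by the norms $\|f_i\|_{x_i,\gamma}$. Since $I_1$ and $I_2$ have disjoint closures and $I_1$ is bounded, the kernel \eqref{kernel P} of $1_{I_1}\mathrm{Op}(1_{\R^+})1_{I_2}$ is smooth on $I_1\times I_2$ and decays in the $I_2$-variable, so this operator lies in every Schatten class $\mathcal S_p$, $p>0$. Sobolev's inequality (as in \cite{Sobolev17}) then bounds $\|\Delta(I_1,I_2;f_i)\|_{\mathcal S_1}$ by a constant times $\|f_i\|_{x_i,\gamma}\,\|1_{I_1}\mathrm{Op}(1_{\R^+})1_{I_2}\|_{\mathcal S_p}^{\gamma}$ for some small $p<\gamma$, which is finite. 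The same inequality applied to $f-f_{0,\delta}$, where $f_{0,\delta}\in\mathsf{C}^2_c(\R)$ agrees with $f$ outside $\delta$-neighborhoods of the points of $\mathcal X$, yields
\begin{equation*}
\bigl|\tr\Delta(I_1,I_2;f)-\tr\Delta(I_1,I_2;f_{0,\delta})\bigr|\xrightarrow{\delta\to 0}0.
\end{equation*}

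Next I would compute the trace for the smoother $f_{0,\delta}$ using Widom's Wiener--Hopf machinery. Because $W(1_\Lambda)$ and $P(\Lambda)$ have the same nonzero spectrum (as noted after \eqref{D_operator}) and $f_{0,\delta}(0)=0$, one has $\tr\Delta(I_1,I_2;f_{0,\delta})=\tr D(I_1,I_2;f_{0,\delta})$. I would then introduce smooth symbols $\varphi_\varepsilon^{(\Lambda)}\in\mathsf{C}^\infty(\R)$ tending pointwise to $1_\Lambda$ for each $\Lambda\in\{I_1,I_2,I_1\cup I_2\}$, chosen compatibly so that $\varphi_\varepsilon^{(I_1\cup I_2)}=\varphi_\varepsilon^{(I_1)}+\varphi_\varepsilon^{(I_2)}$ once $\varepsilon$ is smaller than the separation between $I_1$ and $I_2$. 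Using \eqref{exp W}, the Hankel estimate \eqref{Hankel estimate}, the bound \eqref{1.13}, and the Fourier representation of $f_{0,\delta}$, one shows that $D\bigl((\varphi_\varepsilon^{(\Lambda)})^2;f_{0,\delta}\bigr)$ converges in trace norm to $D(1_\Lambda;f_{0,\delta})$ as $\varepsilon\to 0$. Since $(\varphi_\varepsilon^{(\Lambda)})^2\in\mathcal K$ satisfies Widom's hypotheses, \eqref{Widom} gives $\tr D\bigl((\varphi_\varepsilon^{(\Lambda)})^2;f_{0,\delta}\bigr)=\mathcal B\bigl((\varphi_\varepsilon^{(\Lambda)})^2;f_{0,\delta}\bigr)$.

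The crux is the asymptotic analysis of $\mathcal B$ applied to the combination \eqref{D_operator} as $\varepsilon\to 0$. Since the integrand concentrates where $\varphi_\varepsilon^{(\Lambda)}$ transitions, I would split the $(\xi_1,\xi_2)$-integral into diagonal pieces (both variables near the same endpoint of $\Lambda$) and off-diagonal pieces (near two distinct endpoints). The diagonal contributions at each of the four endpoints $a_1,b_1,a_2,b_2$ diverge logarithmically in $\varepsilon$, but each endpoint appears once with sign $+1$ in a single-interval term and once with sign $-1$ in the union term, so these divergences cancel exactly at fixed $\delta$. Off-diagonal pairs with both endpoints in the same component $I_j$ cancel for the same reason. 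Only the four cross pairs between an endpoint of $I_1$ and an endpoint of $I_2$ survive. On each such pair $U(\sigma_1,\sigma_2;f_{0,\delta})\to U(0,1;f_{0,\delta})$ as $\varepsilon\to 0$, and the resulting elementary integrals of $(\xi_1-\xi_2)^{-2}$ sum, with the correct signs, to $\frac{1}{2\pi^2}\,U(0,1;f_{0,\delta})\,\ln\bigl[(a_2-a_1)(b_2-b_1)/((a_2-b_1)(b_2-a_1))\bigr]$. Continuity of $U(0,1;\cdot)$ in the test function then recovers \eqref{formula} upon sending $\delta\to 0$.

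The main obstacle is ordering these two limits correctly: the $\mathsf{C}^2$-norms of $f_{0,\delta}$ blow up as $\delta\to 0$, so the cancellation of the diagonal logarithmic divergences must be carried out at fixed $\delta$ by rewriting $\mathcal B$ of the combination \eqref{D_operator} as a single integral whose integrand is uniformly bounded in $\varepsilon$ for each fixed $\delta$; only afterwards may one take $\delta\to 0$.
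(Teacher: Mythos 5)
Your overall architecture (Sobolev's inequality for the trace-class statement and the reduction to a $\mathsf C^2$ test function $f_{0,\d}$; Widom's formula for smoothed symbols; an $\eps\to0$ limit of the combination of $\mathcal B$-coefficients, whose only surviving contribution comes from pairs of endpoints in different intervals) matches the paper's, and the first and last stages are essentially correct. The gap is in the middle, at the step that transports the problem from position space to Fourier space. You assert that $\tr\D(I_1,I_2;f_{0,\d})=\tr D(I_1,I_2;f_{0,\d})$ ``because $W(1_\L)$ and $P(\L)$ have the same nonzero spectrum.'' That argument does not close: the individual operators $f(P(\L))$ and $f(W(1_\L))$ are \emph{not} trace class, so you cannot take traces termwise, and the partial isometries intertwining $Q_\L PQ_\L$ with $PQ_\L P$ away from their kernels depend on $\L$, so they do not conjugate the three-term combination \eqref{D_operator} into $\D(I_1,I_2;f)$. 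This is exactly the identification the paper explicitly declines to prove after \eqref{D_operator}. The paper's substitute is a genuine lemma: for the \emph{smoothed} cutoffs it proves $\tr\big[f(Q_{1,\eps}PQ_{1,\eps})+f(Q_{2,\eps}PQ_{2,\eps})-f(Q_\eps PQ_\eps)\big]=\tr\big[f(PQ_{1,\eps}^2P)+f(PQ_{2,\eps}^2P)-f(PQ_\eps^2P)\big]$ via the recursion \eqref{recursion 1}--\eqref{recursion 2} for traces of powers $E_n,F_n$ (with uniform bounds $\|E_n\|_{\mathcal S_1},\|F_n\|_{\mathcal S_1}\le Cn$ to sum the exponential series), and only \emph{then} conjugates by the single, $\L$-independent unitary $\mathcal F$ to reach the Wiener--Hopf operators $D(\varphi_{j,\eps}^2;f)$ and Widom's formula \eqref{Widom}. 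Some such argument is indispensable and is missing from your proposal.

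A second, related defect: you claim $D\big((\varphi_\eps^{(\L)})^2;f_{0,\d}\big)\to D(1_\L;f_{0,\d})$ in trace norm for each individual $\L$. This is false: for a discontinuous symbol $1_\L$ the operator $D(1_\L;f)$ is not trace class (the paper notes below \eqref{Widom} that $\mathcal B(1_\L;f)$ diverges, and indeed $\|D(\varphi_\eps^2;f)\|_{\mathcal S_1}$ grows like $|\ln\eps|$). The removal of the smoothing can only be performed on the full three-term combination, where the divergences cancel. The paper does this on the position-space side, where the cancellation is transparent: with $A_\eps=Q_{1,\eps}PQ_{1,\eps}+Q_{2,\eps}PQ_{2,\eps}$ and $B_\eps=Q_\eps PQ_\eps$ one has $A_\eps-B_\eps=-Q_{1,\eps}PQ_{2,\eps}-Q_{2,\eps}PQ_{1,\eps}$, which is trace class uniformly in $\eps$ by \autoref{T in Sp} because the intervals are separated, and then \eqref{Fourier rep} together with \autoref{strong conv and trace norm lem} gives $f(A_\eps)-f(B_\eps)\to f(A)-f(B)$ in $\mathcal S_1$. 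If you insist on removing the smoothing on the Wiener--Hopf side you must reorganize the three $D$-terms so that only separated cross terms remain before letting $\eps\to0$; as written, the step fails. By contrast, your final point about the order of the $\d$ and $\eps$ limits is handled correctly (fix $\d$, send $\eps\to0$, then $\d\to0$), and in the paper's construction the ``diagonal'' divergences never even appear because $\varphi_\eps=\varphi_{1,\eps}+\varphi_{2,\eps}$ makes the cancellation exact pointwise in the integrand.
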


This is proved in~\autoref{section 3}. We end this introduction with a few 
\begin{remarks} \label{remarks 1.4}
\begin{enumerate}

\item The quantity $\tr \D(I_1,I_2;f)$ has a number of well-known symmetries as can be seen by the explicit answer. The so-called cross--ratio term, $\frac{|a_2-a_1||b_2-b_1|}{|a_2-b_1||b_2-a_1|}$, of the two intervals $I_1$ and $I_2$ is strictly larger than $1$ and invariant under translations, multiplications, and inversion. To be more precise, we define the translations $T_a: x\mapsto x+a$, multiplication $M_t: x\mapsto t x$, $t\not=0$, and inversion $(\cdot)^{-1}: x\mapsto 1/x$, $x>0$.\footnote{It is natural to map $\R$ to the unit circle by the Cayley transform as done in \cite{LX18, Xu21} and thus intervals on $\R$ to intervals on the unit circle.} Then,
\[ \tr \D(V(I_1),V(I_2);f) = \tr \D(I_1,I_2;f)\,,\quad V = T_a, M_t,(\cdot)^{-1}\,.
\]
A translation $T_a$ is (obviously) implemented on $\mathsf{L}^2(\R)$ by the unitary transformation $f\mapsto f(x+a)$, the multiplication $M_t$ by $f\mapsto \sqrt{|t|}f(xt)$, and the inversion on $\mathsf{L}^2(\R^+)$ by the unitary and self-adjoint transformation $W:(Wf)(x)\coloneqq x^{-1} f(x^{-1})$. Here we have the properties, $W^* = W$, $W1_{\L^{-1}}W = 1_\L$ for $\L\subset\R^+$, where $\L^{-1}:=\{1/x:x\in\L\}$, and $WPW = P$. 

\item The method of proof in~\cite{LX18}, developed for the von-Neumann case $\a=1$, does not easily generalize to all $\a>0$; in fact, we only know how to extend this method to $\a\in(0,1)$, let alone to the more general functions we allow in this theorem. See our discussion in \autoref{section: master thesis}.

\item Formula \eqref{formula} makes sense (as a limit) if one of the intervals is unbounded, that is, either $a_1=-\infty$ or $b_2=+\infty$. We point out that also the proof requires only one of these intervals to be bounded. The formula can be easily extended to several intervals. Then, in the case of the von-Neumann entropy with $\a=1$ this is the formula of Casini and Huerta \cite{CH05} proved in \cite{LX18}. In the appendix we shall prove the following generalization of~\autoref{main thm}: 
\end{enumerate}
\end{remarks}

\begin{thm}\label{theorem:many_intervals}
Let $I_k:=(a_k,b_k)$, $1\leq k\leq N$, be $N$ open, pairwise disjoint intervals.
Let $n_1,n_2\geq 1$ such that $n_1+n_2=N$ and let $\mathcal{P}_1,\mathcal{P}_2\subset\{1,\dots,N\}$ be two index
sets of respectively $n_1,n_2$ elements such that $\mathcal{P}_1\cup\mathcal{P}_2=\{1,\dots,N\}$ and
$\mathcal{P}_1\cap\mathcal{P}_2=\emptyset$.
We define $\calI_j:=\bigcup_{k\in\mathcal{P}_j}I_j$ for $j=1,2$, and assume $\mathcal I_1$ to be bounded. Then, under the same assumptions on the function $f$ as in~\autoref{main thm}, the operator $\D(\calI_1,\calI_2;f)$ is trace class and
\begin{align}\label{formula_generalized}
\tr \Delta(\calI_1,\calI_2;f)=
\frac{U(0,1;f)}{2\pi^2}\sum_{k\in\mathcal{P}_1}\sum_{\ell\in\mathcal{P}_2} \ln\Big[\frac{|a_k-a_\ell||b_k-b_\ell|}{|a_k-b_\ell||b_k-a_\ell|}\Big]\,.
\end{align}
\end{thm}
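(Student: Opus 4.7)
My strategy is to adapt the Widom-based approach of \autoref{main thm}. The trace-class property of $\Delta(\calI_1,\calI_2;f)$ follows almost verbatim from the two-interval argument: since the intervals $I_1,\dots,I_N$ have pairwise disjoint closures, so do $\calI_1$ and $\calI_2$, and the proof via Sobolev's inequality combined with the splitting $f=f_{0,\d}+(f-f_{0,\d})$ uses only this disjointness together with the boundedness of $\calI_1$.

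To compute the trace, I would smooth the indicator functions. Choose $\varphi_\eps^{(j)}\in\mathsf{C}^\infty_c(\R)$ with $\varphi_\eps^{(j)}\to 1_{\calI_j}$ pointwise and with pairwise disjoint supports for all sufficiently small $\eps>0$. Set $\varphi_\eps:=\varphi_\eps^{(1)}+\varphi_\eps^{(2)}$, a smooth approximation of $1_{\calI_1\cup\calI_2}$. Widom's trace formula~\eqref{Widom} applied to each of $D(\varphi_\eps^{(j),2};f)$ and $D(\varphi_\eps^2;f)$ leads, for the combination
\begin{align*}
\mathcal{B}(\varphi_\eps^{(1),2};f)+\mathcal{B}(\varphi_\eps^{(2),2};f)-\mathcal{B}(\varphi_\eps^2;f),
\end{align*}
to an integrand that vanishes identically on each self-region $\mathrm{supp}(\varphi_\eps^{(j)})\times\mathrm{supp}(\varphi_\eps^{(j)})$: there $\varphi_\eps=\varphi_\eps^{(j)}$ and $\varphi_\eps^{(j'),2}=0$ for $j'\neq j$, and $U(0,0;f)=0$ since $f(0)=0$. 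Hence only the cross regions $\mathrm{supp}(\varphi_\eps^{(1)})\times\mathrm{supp}(\varphi_\eps^{(2)})$ and its reflection contribute. Using $U(0,1;f)=U(1,0;f)$ and $U(1,1;f)=0$, the cross integrand limits to $2U(0,1;f)/(\xi_1-\xi_2)^2$ on $\calI_1\times\calI_2$, and the combination converges to
\begin{align*}
\frac{U(0,1;f)}{2\pi^2}\int_{\calI_1\times\calI_2}\frac{\dd\xi_1\,\dd\xi_2}{(\xi_1-\xi_2)^2}=\frac{U(0,1;f)}{2\pi^2}\sum_{k\in\mathcal{P}_1}\sum_{\ell\in\mathcal{P}_2}\ln\frac{|a_k-a_\ell||b_k-b_\ell|}{|a_k-b_\ell||b_k-a_\ell|},
\end{align*}
where each cell integral is evaluated by elementary antidifferentiation. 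The transition from Widom's trace to $\tr\Delta(\calI_1,\calI_2;f)$ is carried out as in \autoref{main thm}: one uses $W(f(\varphi_\eps^2))=W(f(\varphi_\eps^{(1),2}))+W(f(\varphi_\eps^{(2),2}))$ (a consequence of the disjoint supports and $f(0)=0$) to rewrite the Besov combination as $\tr[f(W(\varphi_\eps^{(1),2}))+f(W(\varphi_\eps^{(2),2}))-f(W(\varphi_\eps^2))]$, and then invokes the eigenvalue duality between $W(1_\Lambda)$ and $P(\Lambda)$ together with the continuity argument via $f=f_{0,\d}+(f-f_{0,\d})$ to identify the $\eps\to0$ limit with $\tr\Delta(\calI_1,\calI_2;f)$.

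The main technical obstacle is that each individual Besov trace diverges logarithmically in $\eps$ because of self-contributions at the endpoints of the intervals of $\calI_j$, and only the specific combination above stays bounded. Controlling this cancellation uniformly in $\eps$, and then passing the resulting trace limit through to identify it with $\tr\Delta(\calI_1,\calI_2;f)$, is the analogue of the key step in the proof of \autoref{main thm} and constitutes the bulk of the technical work.
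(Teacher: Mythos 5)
Your proposal is correct and follows essentially the same route as the paper's appendix proof: the trace-class property and the reduction to Widom's coefficient are inherited verbatim from the two-interval case, and the limit of the Besov combination is computed by observing that the integrand cancels identically on all self-regions and survives only on the cross cells $I_k\times I_\ell$ with $k\in\mathcal{P}_1$, $\ell\in\mathcal{P}_2$, which is exactly the content of the paper's region-by-region analysis (its \autoref{lemma:Uunion} and the $4N+1$-piece partition of $\R$). One small correction of emphasis: the paper does not invoke an eigenvalue duality between $W(1_\Lambda)$ and $P(\Lambda)$ for sharp indicators (it explicitly declines to prove that); the passage from the $\tr[f(Q_{j,\eps}PQ_{j,\eps})]$-combination to the Widom side is carried out for the smoothed symbols via the $E_n$/$F_n$ recursion and cyclicity of the trace, which is the machinery your phrase ``as in \autoref{main thm}'' correctly defers to.
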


Our main application of \autoref{theorem:many_intervals} is when the test function $f$ is the R\'enyi entropy function $h_\a$ with any $\a>0$. Let us first define the $\a$-R\'enyi entanglement entropy of the ground state of massless Dirac fermions with respect to disjoint subsets $\L$ and $\L'$ of the real line,
\be \mathsf{EE}(\L,\L';\a) \coloneqq \tr\D(\L,\L';h_\a)\,,
\ee
whenever $\D(\L,\L';h_\a)$ is trace class. This definition includes, of course, the von-Neumann entanglement entropy since $\mathsf{EE}(\L,\L';1) = \mathsf{EE}(\L,\L')$, which we introduced below \eqref{Delta}. The result is formulated in 
\begin{cl} \label{corollar}
Under the same assumptions on the sets $\mathcal I_1$ and $\mathcal I_2$ as in \autoref{theorem:many_intervals}, the $\a$-R\'enyi entanglement entropy of the ground state of massless Dirac fermions with respect to $\mathcal I_1$ and $\mathcal I_2$ is well-defined and
\be \mathsf{EE}(\mathcal I_1,\mathcal I_2;\a) = \frac{1+\a}{12\a}\,\sum_{k\in\mathcal{P}_1}\sum_{\ell\in\mathcal{P}_2} \ln\Big[\frac{|a_k-a_\ell||b_k-b_\ell|}{|a_k-b_\ell||b_k-a_\ell|}\Big]\,.
\ee
\end{cl}

\begin{proof} The R\'enyi entropy function $h_\a$ satisfies the conditions of \autoref{assump} as we noticed in \autoref{remarks Renyi}. Hence, $\D(\mathcal I_1,\mathcal I_2;h_\a)$ is trace class and the quantity $\mathsf{EE}(\mathcal I_1,\mathcal I_2;\a)$ is well-defined. Moreover, the coefficient $U(0,1;h_\a)$ can be computed explicitly, namely $U(0,1;h_\a) = \pi^2 (1+\a)/(6\a)$, see \cite[Appendix in arXiv version]{LSS14} and \cite{Blanco2022}. The rest follows immediately by \autoref{theorem:many_intervals}.
\end{proof}

\section{Trace-class properties}

We say that a compact operator $A$ on a (separable) Hilbert space $\mathcal H$ is in the Schatten--von Neumann class $\mathcal S_p$ for some $p>0$ if its singular values, $s_n(A), n\in\N$, form a sequence in $\ell^p(\R)$. The singular values $s_n(A)$ of $A$ are the eigenvalues of $\sqrt{A^*A}$. We arrange them in decreasing order, $s_n(A)\ge s_{n+1}(A)$ for all $n\in\N$. We equip $\mathcal S_p$ with the (quasi-)norm
\begin{align} \|A\|_{\mathcal S_p} \coloneqq \Big(\sum_{n=1}^\infty s_n(A)^p\Big)^{1/p}\,,\quad A\in\mathcal S_p\,.
\end{align}
For $p\ge1$, the space $\mathcal S_p$ is then a normed space, while for $p\in(0,1)$ it is a so-called quasi-normed space since this quasi-norm does not satisfy the usual triangle inequality but instead the so-called $p$-triangle inequality, $\|A_1+A_2\|_{\mathcal S_p}^p \le \|A_1\|_{\mathcal S_p}^p + \|A_2\|_{\mathcal S_p}^p$. The vector space $\mathcal S_p$ forms an ideal in the sense that if $S_1$ and $S_2$ are bounded operators on $\mathcal H$ and $A\in\mathcal S_p$ then $S_1AS_2\in \mathcal S_p$ with (the H\"older inequality) $\| S_1AS_2\|_{\mathcal S_p} \le \|S_1\| \|S_2\| \|A\|_{\mathcal S_p}$, where $\|\cdot\|$ denotes the operator norm. We frequently use that the adjoint operator $A^*$ of $A$ has the same $p$-quasi norm as $A$, that is, $\|A^*\|_{\mathcal S_p} = \|A\|_{\mathcal S_p}$ for all $p>0$. The space $\mathcal S_1$ is the ideal of trace-class operators and a Banach space with norm $\|\cdot\|_{\mathcal S_1}$. If an operator $A\in\mathcal S_1$, then we say that $A$ is trace class. The ideal of compact operators may be written as $\mathcal S_\infty$. As a general reference to the above we recommend \cite{BS77} and \cite{B88}. 

For fixed operator $A$, the $p$ (quasi-)norm $\|A\|_{\mathcal S_p}$ is increasing with decreasing $p$. So it is harder to prove an upper bound on $\|A\|_{\mathcal S_p}$ if we lower $p$.

Besides the projection $P=1_{\R^+}(-\mathrm{i}\dd/\dd x)$, it is convenient from a notational point of view to introduce the projections $Q_1\coloneqq 1_{I_1}(X)$, $Q_2\coloneqq 1_{I_2}(X)$ and $Q\coloneqq Q_1+Q_2 = 1_{I}(X)$ with $I = I_1\cup I_2$. Then, assuming $f(0)=0$,
\begin{align} \D(I_1,I_2;f) &= f\big(P(I_1)\big) + f\big(P(I_2)\big) - f\big(P(I)\big) = f(Q_1PQ_1) + f(Q_2PQ_2) - f(QPQ) \nonumber
\\
&= f(Q_1PQ_1 + Q_2PQ_2) - f(QPQ)
\,,
\end{align}
since $I_1\cap I_2=\emptyset$, or equivalently $Q_1Q_2=0$. This is a crucial identity. If we took the right-hand side, that is, $f(Q_1PQ_1 + Q_2PQ_2) - f(QPQ)$, as our definition of $\D(I_1,I_2;f)$ then we would not need to assume $f(0)=0$.

\subsection{Polynomial test functions}\label{subsection polynomial}

In the simplest (non-trivial) case we consider the linear function $f(t) = t$ in $\Delta(I_1,I_2;f)$. Then,
\[ \Delta(I_1,I_2;t) = -T -T^*
\]
with the operator $T\coloneqq Q_1PQ_2: \mathsf{L}^2(I_2)\to \mathsf{L}^2(I_1)$ and its adjoint operator $T^*$, both extended trivially (by $0$) to act on $\mathsf{L}^2(\R)$. According to \eqref{kernel P}, $T$ has integral kernel
\be \label{kernel T} 
T(x,y) = -\frac{1}{2\pi \mathrm{i}} \,\frac{1}{x-y}\,,\quad x\in I_1\,,y\in I_2 
\ee
since the intervals $I_1$ and $I_2$ have a positive distance. For the same reason, this kernel is not singular (in fact, arbitrarily often differentiable) and the Hilbert--Schmidt norm of $T$ is bounded, that is,
\[ \big\|Q_1PQ_2\big\|_{\mathcal S_2}^2 = \frac{1}{(2\pi)^2} \int_{I_1} \dd x \int_{I_2} \dd y \,\frac{1}{(x-y)^2} <\infty\,.
\]
The operator $T$ is not only in the Hilbert--Schmidt class $\mathcal S_2$ but in any Schatten--von Neumann class $\mathcal S_p$ for any $p>0$. Since we use this statement a number of times we formulate this in the following

\begin{lemma} \label{T in Sp} The operator $T$ with integral kernel defined in \eqref{kernel T} and with the intervals $I_1$ and $I_2$ satisfying \autoref{assump} is in any Schatten--von Neumann class $\mathcal S_p$ for any $p>0$. More generally, let $f_1$ and $f_2$ be bounded, measurable functions with supports in the closures $\widebar{I_1}$ and $\widebar{I_2}$, respectively. Then, the operator $f_1(X)Pf_2(X)$ is in $\mathcal S_p$ with $\big\|f_1(X)Pf_2(X)\big\|_{\mathcal S_p} \le \|f_1\|_{\mathsf{L}^\infty(I_1)} \, \|f_2\|_{\mathsf{L}^\infty(I_2)} \|T\|_{\mathcal S_p}$ for any $p>0$. See \eqref{def f(X)} for the definition of $f_j(X)$.
\end{lemma}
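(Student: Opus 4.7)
The plan is to realise $T$ as a Bochner integral of rank-one operators and exploit the (quasi-)triangle inequality in $\mathcal S_p$. Without loss of generality I assume $b_1<a_2$; the opposite ordering $b_2<a_1$ is brought into this form by the unitary reflection $x\mapsto -x$, which preserves every Schatten class. Set $d\coloneqq a_2-b_1>0$. For $x\in I_1$ and $y\in I_2$ one then has $y-x\ge d>0$, so
\[
\frac{1}{y-x}=\int_0^\infty \e^{-t(y-x)}\,\dd t\,,
\]
and consequently the kernel \eqref{kernel T} factorises as
\[
T(x,y)=\frac{1}{2\pi\mathrm{i}}\int_0^\infty u_t(x)\,v_t(y)\,\dd t\,,\qquad u_t(x)\coloneqq \e^{tx}1_{I_1}(x)\,,\ v_t(y)\coloneqq \e^{-ty}1_{I_2}(y)\,.
\]
This expresses $T=\frac{1}{2\pi\mathrm{i}}\int_0^\infty u_t\otimes v_t\,\dd t$ as an integral of rank-one operators, each of which satisfies $\|u_t\otimes v_t\|_{\mathcal S_p}=\|u_t\|_{\mathsf{L}^2}\,\|v_t\|_{\mathsf{L}^2}$ for every $p>0$.

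Applying the $p$-triangle inequality for $p\le 1$ (or the usual triangle inequality for $p\ge 1$) yields
\[
\|T\|_{\mathcal S_p}^{\min(p,1)}\le (2\pi)^{-\min(p,1)}\int_0^\infty \bigl(\|u_t\|\,\|v_t\|\bigr)^{\min(p,1)}\,\dd t\,.
\]
A direct computation gives $\|u_t\|^2=(\e^{2tb_1}-\e^{2ta_1})/(2t)$ and $\|v_t\|^2\le \e^{-2ta_2}/(2t)$. As $t\to 0^+$, one has $\|u_t\|^2\to|I_1|$ (finite, by boundedness of $I_1$), whereas $\|v_t\|^2=O(1/t)$, the $1/t$ singularity being present only when $I_2$ is unbounded; hence $(\|u_t\|\,\|v_t\|)^{\min(p,1)}\lesssim t^{-\min(p,1)/2}$ near the origin, which is integrable precisely because $\min(p,1)/2<1$ for every $p>0$. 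As $t\to\infty$, $\|u_t\|\,\|v_t\|\le (2t)^{-1}\e^{-td}$, giving exponential decay and hence integrability at infinity. Thus $T\in\mathcal S_p$ for every $p\in(0,1]$, and the inclusion $\mathcal S_p\subset\mathcal S_q$ for $p\le q$ extends the conclusion to all $p>0$.

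For the generalisation, observe that since the supports of $f_j$ lie in $\overline{I_j}$ and $Q_1Q_2=0$, the $\delta$-contribution to the kernel \eqref{kernel P} is absent, and therefore $f_1(X)\,P\,f_2(X)=f_1(X)\,T\,f_2(X)$ as operators on $\mathsf{L}^2(\R)$. The ideal property (H\"older inequality) of $\mathcal S_p$ recalled at the beginning of this section then yields the claimed bound $\|f_1(X)P f_2(X)\|_{\mathcal S_p}\le \|f_1\|_{\mathsf{L}^\infty(I_1)}\,\|f_2\|_{\mathsf{L}^\infty(I_2)}\,\|T\|_{\mathcal S_p}$. The only technical subtlety in the whole argument is the $t^{-1/2}$ behaviour of $\|v_t\|$ at the origin in the unbounded case, which is tamed precisely because the exponent $\min(p,1)/2$ produced by the quasi-triangle inequality is strictly less than $1$ for every $p>0$.
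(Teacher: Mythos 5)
Your reduction of the general statement to the special case $T=Q_1PQ_2$ via $f_j(X)=f_j(X)Q_j$ and the H\"older/ideal property is correct and is exactly what the paper does. The Laplace-transform factorization $T=\frac{1}{2\pi\mathrm{i}}\int_0^\infty u_t\otimes v_t\,\dd t$ is also a valid operator identity, and since $\int_0^\infty\|u_t\|\,\|v_t\|\,\dd t<\infty$ it correctly yields $T\in\mathcal S_1$ (hence $T\in\mathcal S_p$ for all $p\ge1$), because the Bochner-integral triangle inequality is available in the Banach space $\mathcal S_1$.

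The gap is the step
\[
\|T\|_{\mathcal S_p}^{p}\le (2\pi)^{-p}\int_0^\infty \bigl(\|u_t\|\,\|v_t\|\bigr)^{p}\,\dd t\,,\qquad 0<p<1\,.
\]
The $p$-triangle inequality $\|A_1+A_2\|_{\mathcal S_p}^p\le\|A_1\|_{\mathcal S_p}^p+\|A_2\|_{\mathcal S_p}^p$ does not pass to continuum superpositions: for a quasi-norm with $p<1$ there is no inequality of the form $\|\int X_t\,\dd t\|_{\mathcal S_p}^p\le\int\|X_t\|_{\mathcal S_p}^p\,\dd t$. Such an inequality is not even reparametrization-invariant; e.g.\ with $X_t=nX\,1_{[0,1/n]}(t)$ for a fixed rank-one $X$ one has $\int X_t\,\dd t=X$ while the right-hand side equals $n^{p-1}\|X\|_{\mathcal S_p}^p\to0$, so the inequality would force $X=0$. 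Since the case $p<1$ is precisely what the paper needs (Sobolev's bound \eqref{inequ Sobolev} is applied with $\|T\|_{\mathcal S_\sigma}$, $\sigma<\gamma\le1$, and $\gamma=\alpha$ can be small for R\'enyi indices $\alpha<1$), this is a fatal gap rather than a technicality; a dyadic splitting of the $t$-integral does not rescue it either, because the blocks $\int_{2^k}^{2^{k+1}}u_t\otimes v_t\,\dd t$ are no longer of finite rank. The paper instead factorizes $T=\big(Q_1(\mathds{1}_1+A)^{-\b}\widetilde{Q_1}\big)\cdot\big(\widetilde{Q_1}(\mathds{1}_1+A)^{\b}\varphi(X)PQ_2\big)$ with $A$ the Neumann Laplacian on a slightly enlarged copy of $I_1$: the first factor has explicitly known singular values and lies in $\mathcal S_p$ once $\b>1/(2p)$, while the second is Hilbert--Schmidt because the kernel is smooth on separated intervals. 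To repair your route you would need an actual bound on the singular values $s_n(T)$, for instance by approximating the analytic kernel $1/(y-x)$ on $I_1\times I_2$ by finite-rank kernels with superpolynomially small error.
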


\begin{proof} Let us first get rid of $f_1$ and $f_2$. We use $f_j(X) = f_j(X)Q_j = Q_jf_j(X)$, the H\"older inequality and the fact that the operator norm $\|f_j(X)\|$ equals the $\mathsf{L}^\infty$-norm $\|f_j\|_{\mathsf{L}^\infty(I_j)}$ for $j\in\{1,2\}$,
\begin{align} \big\|f_1(X)Pf_2(X)\big\|_{\mathcal S_p} & = \big\|f_1(X) Q_1 P Q_2f_2(X)\big\|_{\mathcal S_p}\le \|f_1(X)\| \, \big\|Q_1 PQ_2\big\|_{\mathcal S_p} \,\|f_2(X)\|\nonumber
\\
&=\|f_1\|_{\mathsf{L}^\infty(I_1)} \, \|f_2\|_{\mathsf{L}^\infty(I_2)} \big\|Q_1 PQ_2\big\|_{\mathcal S_p} \,,
\end{align}
which is the claimed inequality. We still have to show that $T\in\mathcal S_p$. To this end, we may slightly enlarge $I_1$ to $\widetilde{I_1}$ so that $\widetilde{I_1}$ is bounded and $\widetilde{I_1}$ still has a positive distance to ${I_2}$. Let $\varphi \in \mathsf{C}^\infty_c(\widetilde{I_1})$ be a cutoff function such that $\varphi(x)=1$ for $x\in I_1$. Then, $Q_1=Q_1\varphi(X)$ and $T = Q_1 \varphi(X) PQ_2$. The (standard) trick is to insert the operator $\mathds{1}_1 = (\mathds{1}_1+A)^{-\b}(\mathds{1}_1+A)^{\b}$ between $Q_1$ and $\varphi(X)P$, where $\mathds{1}_1$ is the identity operator on $\mathsf{L}^2(\widetilde{I_1})$ and $A$ is a differential operator on $\widetilde{I_1}$ whose eigenvalues are known explicitly and the integer $\b>0$ is properly chosen. A natural candidate for this is the Laplacian $A\coloneqq -\dd^2/\dd x^2$ on $\widetilde{I_1}$, say with Neumann boundary conditions. The inverse $(\mathds{1}_1+A)^{-1}$ is well-defined since $\mathds{1}_1+A$ is a strictly positive operator and can be inverted on $\mathsf{L}^2(\widetilde{I_1})$. Thus, the (integer) power $(\mathds{1}_1+A)^{-\b} = \big((\mathds{1}_1+A)^{-1}\big)^\b$ is well-defined on $\mathsf{L}^2(\widetilde{I_1})$. This operator may be extended by 0 to define an operator on $\mathsf{L}^2(\R)$. We avoid this by including the operator $\widetilde{Q_1}\coloneqq 1_{\widetilde{I_1}}(X)$ to the right of $(\mathds{1}_1+A)^{-\b}$. Then, we write
\[  Q_1  \varphi(X) P Q_2 = Q_1 (\mathds{1}_1+A)^{-\b} (\mathds{1}_1+A)^{\b} \varphi(X) P Q_2 = Q_1 (\mathds{1}_1+A)^{-\b} \widetilde{Q_1} \cdot \widetilde{Q_1} (\mathds{1}_1+A)^{\b} \varphi(X) P Q_2\,.
\] 
The eigenvalues (equal to the singular values) of $A$ are $(n\pi/|\widetilde{I_1}|)^2$, $n\in\N_0$. Then,
\be \big\|Q_1(\mathds{1}_1+A)^{-\b} \widetilde{Q_1}\big\|_{\mathcal S_p}^p \le 1 + \sum_{n=1}^\infty \left(1+\left(\frac{n \pi}{|\widetilde{I_1}|}\right)^2\right)^{-\b p} < 1+ (|\widetilde{I_1}|/\pi)^{2\b p} \sum_{n=1}^\infty n^{-2\b p} < \infty\,,
\ee
if $\b> 1/(2p)$. The operator $\widetilde{Q_1}(\mathds{1}_1+A)^\b  \varphi(X) P Q_2$ is bounded, in fact Hilbert--Schmidt. We remark that the operator $\varphi(X)$ is required in order to satisfy the boundary conditions of $A$. We estimate its Hilbert--Schmidt norm,
\begin{align} &\big\|\widetilde{Q_1} (\mathds{1}_1+A)^\b  \varphi(X) PQ_2\big\|_{\mathcal S_2} \nonumber\\
&\le \sum_{k=0}^\b  {\b \choose k} \big\|\widetilde{Q_1} A^{k}  \varphi(X)PQ_2\big\|_{\mathcal S_2}\nonumber
\\
&\le \sum_{k=0}^\b  {\b \choose k} \sum_{\ell=0}^{2k} {2k\choose\ell}\left \lVert \varphi^{(2k-\ell)} \right \rVert_{\mathsf{L}^\infty(\widetilde{I_1})} \left(\int_{\widetilde{I_1}} \dd x \int_{I_2} \dd y \,\left|\frac{\partial^{\ell}}{\partial x^{\ell}} \frac{1}{(2\pi)(x-y)}\right|^2\right)^{1/2}\nonumber
\\
&=\sum_{k=0}^\b  {\b \choose k}\sum_{\ell=0}^{2k} {2k\choose\ell} \left \lVert \varphi^{(2k-\ell)}\right \rVert_{\mathsf{L}^\infty(\widetilde{I_1})} \frac{\ell!}{2\pi} \left( \int_{\widetilde{I_1}} \dd x \int_{I_2} \dd y \,\left|\frac{1}{(x-y)^{1+\ell}}\right|^2\right)^{1/2}\eqqcolon N_{\b,\varphi}\,.
\end{align}
This $N_{\b,\varphi}$ is finite as long as $\widetilde{I_1}$ and $I_2$ are separated by a positive distance and $\widetilde{I_1}$ is bounded, which they are. Altogether, the operator $T$ is the product of an operator in $\mathcal S_p$ and a bounded operator and therefore in $\mathcal S_p$ with 
\begin{align} \label{T is in Sp eq}
\|T\|_{\mathcal S_p} &\le \big\|Q_1(\mathds{1}_1+A)^{-\b}\widetilde{Q_1}\big\|_{\mathcal S_p} \,\big\|\widetilde{Q_1}(\mathds{1}_1+A)^\b  \varphi(X) P Q_2\big\|_{\mathcal S_\infty}\nonumber
\\
&\le\big\|Q_1(\mathds{1}_1+A)^{-\b}\widetilde{Q_1}\big\|_{\mathcal S_p} \,\big\|\widetilde{Q_1}(\mathds{1}_1+A)^\b \varphi(X) P Q_2\big\|_{\mathcal S_2} \nonumber
\\
&\le \big\|Q_1(\mathds{1}_1+A)^{-\b}\widetilde{Q_1}\big\|_{\mathcal S_p} \, N_{\b,\varphi}<\infty\,,
\end{align}
if $\b>1/(2p)$. Here, we used the H\"older inequality and monotonicity of the norms.
\end{proof}

As a direct consequence of this lemma, the operator $\Delta(I_1,I_2;t)\in\mathcal S_p$ for any $p>0$ and 
\[ \|\Delta(I_1,I_2;t)\|_{\mathcal S_p}\le 2^{1/p}\|T\|_{\mathcal S_p}
\] 
for $p\le1$. A fancy version of our down-to-earth bounds are the bounds by Birman--Solomyak, see \cite{BS77}.

Let us treat one more example explicitly, namely the quadratic polynomial $f(t)=t^2$. Then, 
\begin{align*} \D(I_1,I_2;t^2)&= Q_1PQ_1PQ_1+Q_2PQ_2PQ_2 - (Q_1+Q_2)P(Q_1+Q_2)P(Q_1+Q_2)
\\
&=-\big(T^*T + TT^* + Q_1PT + T^*PQ_1 + TPQ_2+Q_2PT^*\big)\,.
\end{align*}
Again, by the ideal properties, all operators in the sum are in $\mathcal S_p$ and so is $\D(I_1,I_2;t^2)$ with the estimate $\|\D(I_1,I_2;t^2)\|_{\mathcal S_1}\le 6 \|T\|_{\mathcal S_1}$ and a similar bound for general $p\le1$. 

This generalizes to arbitrary polynomials $f(t)=t^m$, $m\in\N, m\ge3$. If we expand the powers in $\Delta(I_1,I_2;f)$ then it is a sum of $2^{m+1}-2$, terms which all include at least one factor of $T$ or $T^*$. As a result we obtain for general monomials,
\be \|\D(I_1,I_2;t^m)\|_{\mathcal S_1}\le (2^{m+1}-2) \|T\|_{\mathcal S_1}\, ,
\ee 
which extends to arbitrary polynomials $f$ (with $f(0)=0$) by the triangle inequality. This is usually not a good bound and certainly not for our next step when we consider functions $f$ satisfying \autoref{assump}. Since the integral kernel is smooth we can calculate the trace of $\D(I_1,I_2;f)$ by the integral over the diagonal of its integral kernel. 

The same steps can be performed to prove that $D(I_1,I_2;f)$ is trace class for any monomial $f$. Clearly, $D(I_1,I_2;t) = 0$ and
\[ W\big(f(1_{I_1}(-\mathrm{i} \,\mathrm d / \mathrm d x ))\big) + W\big(f(1_{I_2}(-\mathrm{i} \,\mathrm d / \mathrm d x ))\big) - W\big(f(1_{I_1\cup I_2}(-\mathrm{i} \,\mathrm d / \mathrm d x ))\big) = 0\,,
\]
for any (bounded, measurable) function $f$ with $f(0)=0$. Therefore,
\begin{align*}
D(I_1,I_2; t^m )= \big( 1_{\R^+} 1_{I_1} (-\mathrm{i} \,\mathrm d / \mathrm d x ) 1_{\R^+} \big)^m + \big( 1_{\R^+} 1_{I_2} (-\mathrm{i} \,\mathrm d / \mathrm d x ) 1_{\R^+} \big)^m - \big( 1_{\R^+} 1_{I_1 \cup I_2} (-\mathrm{i} \,\mathrm d / \mathrm d x ) 1_{\R^+} \big)^m\,.
\end{align*}
As above we introduce the operator $\tilde{T}\coloneqq 1_{I_1} (-\mathrm{i} \,\mathrm d / \mathrm d x ) 1_{\R^+}(X) 1_{I_2} (-\mathrm{i} \,\mathrm d / \mathrm d x )  :\mathsf{L}^2(I_2) \to \mathsf{L}^2(I_1)$, which acts in Fourier space. It has the integral kernel
\[
\tilde{T}(\xi,\eta) = -\frac{1}{2\pi \mathrm{i}} \,\frac{1}{\xi-\eta}\,,\quad \xi\in I_1\,,\eta\in I_2 
\]
and is in $\mathcal S_p$ for any $p>0$. Expanding the powers of $D(I_1,I_2; t^m)$ we always have $\tilde{T}$ or its adjoint $\tilde{T}^*$ as a factor and hence $D(I_1,I_2; t^m)$ is in $\mathcal S_p$ as well.

\subsection{General test functions satisfying \autoref{assump}}

In the general case we use a theorem by Alexander V.~Sobolev \cite[Theorem 2.4]{Sobolev17}. It applies in his notation to $n=2, q\le 1$, $\s\in(0,1]$, $(2-\s)^{-1}<q$, $\s<\g$ and self-adjoint operators $A,B$ such that $A-B\in \mathcal S_{\s q}$. Then, for $f\in\mathsf{C}^2(\R\setminus\{x_0\})\cap \mathsf{C}(\R)$ with support in $[x_0-R,x_0+R]$, $x_0\in\R$, $R>0$,   
\be \label{inequ Sobolev} \|f(A) - f(B)\|_{\mathcal S_q}\le C R^{\g-\s} \|f\|_{x_0,\g}\,\||A-B|^\s\|_{\mathcal S_q}\,,
\ee
where the constant $C$ is independent of the operators $A,B$, of the function $f$ and of the parameter $R$. We recall the norm $\| f\|_{x_0,\g}$ from \eqref{def norm},
\[ \| f\|_{x_0,\g} = \max_{0\le k\le 2} \sup \big\{|f^{(k)}(x)||x-x_0|^{-\g+k}: x\in \R\setminus\{x_0\}\big\}\,. 
\]
In what follows, we adhere to the convention of representing any constant present in the inequalities by the symbol $C$, even if its value changes from line to line. At this point, we want to remark that this norm is submultiplicative in the following way: Let $g_1,g_2  \in \mathsf{C}^2(\R\setminus\{x_0\})\cap \mathsf{C}(\R)$ for some $x_0\in\R$. Then, we have
\be \label{weird norm is sub multiplicative}
\| g_1g_2 \|_{x_0,\g} \le C \| g_1 \|_{x_0,\g} \| g_2 \|_{x_0,0} \, ,
\ee
as can be seen with the product rule on $gh$ and the reordering of factors,
\[
\big|g_1^{(k_1)}(x) g_2^{(k_2)}(x)\big|\, | x - x_0 |^{-\gamma + k_1+k_2} = \big|g_1^{(k_1)}(x)\big|\, | x -x_0 | ^{-\gamma + k_1} \times \big|g_2^{(k_2)}(x)\big|\, | x- x_0| ^{-0+k_2} 
\]  
for $0 \le k_1+k_2 \le k\in\{0,1,2\}$.

To extend this to the slightly more general functions in \autoref{assump} with $|\mathcal X|>1$, we use the sum representation $f= \sum_{i=1}^n f_i$, where $\|f_i \|_{x_i,\g} < \infty$ and the $q$-triangle inequality to deduce
\begin{align*} 
\big\| f(A) - f(B) \big\|_{\mathcal S_q}^q &\le  \sum_{i=1}^n \big\| f_i(A)-f_i(B) \big\|_{\mathcal S_q}^q \le  C \sum_{i=1}^n \big\| f_i \big\|_{x_i,\g}^q \,\big\||A-B|^\s\big\|_{\mathcal S_q}^q \,.
\end{align*}
Applying convexity of $x\mapsto x^{1/q}$ on $\R^+$ (since $q\le1$) we get
\begin{align*} 
\|f(A)-f(B)\|_{\mathcal S_q} &\le C n^{1/q-1} \Big( \sum_{i=1}^n \| f_i\|_{x_i,\g} \Big) \||A-B|^\s\|_{\mathcal S_q}\, .
\end{align*}
We choose $A\coloneqq Q_1PQ_1+Q_2PQ_2$ and $B\coloneqq QPQ$ with $Q=Q_1+Q_2$ (as above). We do not need this here but with this inequality we can even show that $\D(I_1,I_2;h_\a)\in\mathcal S_q$ for any $q>1/2$ by choosing $0<\s<2-1/q$ and $\s<\a$. 

We set $q \coloneqq1$. Let us begin by simplifying the right-hand side of the last inequality, specifically, we note $A-B= -T-T^*$ and thus
\[
\| | A-B |^\sigma \|_{\mathcal S_1} = \| A-B \|_{\mathcal S_\sigma}^\sigma = \| -T-T^* \|_{\mathcal S_\sigma}^\sigma \le 2 \| T \|_{\mathcal S_\sigma}^\sigma  < \infty  \,.
\]
Thus, we have shown
\be
\| \D(I_1,I_2;f) \|_{\mathcal S_1} = \| f(A)-f(B) \|_{\mathcal S_1} \le  C \Big( \sum_{i=1}^n \| f_i\|_{x_i,\g} \Big) \| T \|_{\mathcal S_\sigma}^\sigma < \infty\, , \label{q inequality}
\ee
which means that $\D(I_1,I_2;f)$ is trace class.

To compute the trace of $\D(I_1,I_2;f)$ for general $f$, we will split $f$ into a sum of a $\mathsf{C}^2$-function, which we will take care of in the following section and a function, which yields a small trace norm.   Consider a smooth cutoff function $\zeta \in \mathsf{C}^\infty_{\mathrm{c}} ( (-1,1))$ with $\zeta(x)=1$ for $x \in [-1/2,1/2]$ and for $i=1, \dots , n, \delta \in (0,1)$, and $x \in \R$, define
\[
\zeta_{i,\delta}(x) \coloneqq \zeta \big( (x-x_i)/\delta\big) \,.
\]
Thus, $\zeta_{i,\delta} \in \mathsf{C}^{\infty}_{\mathrm{c} } ((x_i-\delta,x_i+\delta))$, $\zeta_{i,\delta}(t)=1$ for $x_i -\delta/2 \le x \le x_i + \delta/2$ and $\|\zeta_{i,\delta}^{(k)} \|_\infty \le C \delta^{-k}$ for $k=0,1,2$. In particular, $\| \zeta_{i, \d} \|_{x_i,0} \le C$. For $x \in \R$ and $i=1,2, \dots, n$, we define $f_{i,\delta}(x) \coloneqq f_i(x) \zeta_{i,\delta}(x)$ and $f_{0,\delta}(x) \coloneqq f(x) - \sum_{i=1}^n f_{i,\delta}(x)$. We point out that $f_{0,\d}\in\mathsf{C}^2_c(\R)$ as the functions  $x \mapsto f_i(x) \big(1- \zeta_{i,\d}(x)\big)$ are $\mathsf{C}^2$-smooth. Due to the submultiplicativity \eqref{weird norm is sub multiplicative}, we conclude
\[
\| f_{i,\delta} \|_{x_i, \g} \le C \| f_i \|_{x_i ,\g }\, \|\zeta_{i,\d} \|_{x_i,0} \le C \| f_i \|_{x_i,\g} \, .
\]
Thus, as $f_{i,\d}$ has its support inside $[x_i-\d,x_i+\d]$, we can conclude with \eqref{inequ Sobolev}
\begin{align*} 
\|\D(I_1,I_2; f -f_{0,\d})\|_{\mathcal S_1}& \le \sum_{i=1}^n \| \D (I_1,I_2; f_{i,\d} ) \|_{\mathcal S_1} \\
& \le \sum_{i=1}^n C \d^{\g-\s} \| f_{i,\d}\|_{x_i,\g}  \,\|T\|_{\mathcal S_\s}^\s \\
& \le C \d^{\g-\s} \sum_{i=1}^n   \| f_{i}\|_{x_i,\g} \, \|T\|_{\mathcal S_\s}^\s \, < \infty\,. 
\end{align*}
Because $\g>\s$ we get that 
\be \label{continuity} \tr \D(I_1,I_2;f) = \lim_{\d\to0} \tr \D(I_1,I_2;f_{0,\delta})\,.
\ee
Since we know the trace with smooth test functions such as $f_{0,\d}$, we can perform the limit and obtain our main result.

\section{Computation of the trace for sufficiently smooth test functions}\label{section 3}

\subsection{Fourier representation}
The actual computation of the trace of $\D(I_1,I_2;f)$ is done by relating it to the Widom formula for the trace of a ``smooth" version of $D(I_1,I_2;f)$. We use the standard Fourier representation
\begin{align}\label{Fourier rep}
f(A)-f(B)&=\frac{\mathrm{i}}{2\pi} \int_\R \dd t\, t\hat{f}(t) \int_0^1\dd s\, \mathrm{e}^{\mathrm{i}st A}(A-B)\,\mathrm{e}^{\mathrm{i}t(1-s)B}
\end{align}
for bounded operators $A,B$ and functions $f$ such that the $\mathsf{L}^1$-norm of $t\hat{f}$,
\[ \|t\hat{f}\|_{\mathsf{L}^1(\R)} \coloneqq \int_\R \dd t \,|t\hat{f}(t)|
\]
is finite. Here is a quick proof of this representation. We write
\[f(x) -f(y) = \frac{1}{2\pi}\int_\R\dd t \,\hat{f}(t)\Big(\mathrm{e}^{\mathrm{i} tx}  - \mathrm{e}^{\mathrm{i} ty} \Big)\,,\quad x,y\in\R\,,
\]
and use Duhamel's formula,
\[ \mathrm{e}^{\mathrm{i} tA} -  \mathrm{e}^{\mathrm{i} tB} = \int_0^1\dd s\, \frac{\dd}{\dd s} \mathrm{e}^{\mathrm{i} t (sA+(1-s)B)} =  \mathrm{i}t\int_0^1\dd s\,\Big(\mathrm{e}^{\mathrm{i} tsA} (A-B) \,\mathrm{e}^{\mathrm{i} t(1-s) B} \Big)\,.
\]
Altogether, under the condition $\|t\hat{f}\|_{\mathsf{L}^1(\R)}<\infty$, we proved the representation~\eqref{Fourier rep}.

If, in addition $A-B$ is trace class, then $f(A)-f(B)$ is also trace class and the estimate 
\[ \|f(A)-f(B)\|_{\mathcal S_1}\le \frac{1}{2\pi}\|t\hat{f}\|_{\mathsf{L}^1(\R)} \|A-B\|_{\mathcal S_1}
\]
follows. 

Before we continue we make the following 
\begin{remark} Suppose that $f \in \mathsf{C}^{\b+1}_{\mathrm{c}}(\R)$ for $\b\in\{1,2\}$. Then, by the Cauchy--Schwarz inequality we get
\begin{align} \label{Sobolev norm}
\| t^\b \hat {f} \|_{\mathsf{L}^1(\R)} &\le \| (1+t^2)^{\b/2} \hat{f} \|_{\mathsf{L}^1(\R)} \le \| (1+t^2)^{(\b+1)/2} \hat{f} \|_{\mathsf{L}^2(\R)}\, \| 1/\sqrt{1+t^2} \|_{\mathsf{L}^2(\R)} \nonumber
\\&\le \| (1+t^2)^{(\b+1)/2} \hat{f} \|_{\mathsf{L}^2(\R)} \,C \le  C \| f \|_{\mathsf{H}^{\b+1}(\R)} \,.
\end{align}
The last Sobolev norm, $\| f \|_{\mathsf{H}^{\b+1}(\R)} $, is finite, which implies $\| t^\b \hat{f} \|_{\mathsf{L}^1(\R)} < \infty$.
\end{remark}

With this Fourier representation, we can show (again) that for $f\in\mathsf{C}^{2}_c(\R)$, the operator $\D(I_1,I_2;f)$ is trace class if we use $A = Q_1PQ_1+Q_2PQ_2$ and $B= QPQ$ as above. This is a side remark but this representation will be used again shortly. 

\subsection{Smooth version and the relation to Widom's formula}\label{section:smooth_Widom}

Let us introduce smooth versions $Q_{1,\eps}$ and $Q_{2,\eps}$ of the projections $Q_1$ and $Q_2$, respectively of the corresponding indicator functions. To this end, for any $0 < \eps <\operatorname{dist}(I_1,I_2)/4$, we define the intervals $I_{j,\eps} \coloneqq (a_j+\eps,b_j-\eps) \subset I_j$ for $j=1,2$. Let $\varphi_{1,\eps}\in\mathsf{C}^\infty(\R)$ and $\varphi_{2,\eps}\in \mathsf{C}^\infty(\R)$ be cutoff functions such that $\varphi_{j,\eps} \equiv 0$ on $I_{j}^\complement$, $\varphi_{j,\eps}\equiv 1$ on $I_{j,\eps}$, and $0 \le \varphi_{j,\eps}\le 1$ for $j=1,2$. Thus, $\varphi_{j,\eps}$ converges pointwise to $1_{I_j}$ as $\eps \to 0$. We may write $Q_{1,\eps} = \varphi_{1,\eps}(X)$ and $Q_{2,\eps} = \varphi_{2,\eps}(X)$ to denote the multiplication operators by the functions $\varphi_{1,\eps}$ and $\varphi_{2,\eps}$. We also set $Q_\eps\coloneqq Q_{1,\eps}+Q_{2,\eps} = \varphi_\eps(X)$ with the function $\varphi_\eps\coloneqq \varphi_{1,\eps}+\varphi_{2,\eps}$.

Let
\begin{align} A_\eps\coloneqq Q_{1,\eps} P Q_{1,\eps} + Q_{2,\eps} P Q_{2,\eps},\quad B_\eps\coloneqq Q_{\eps} P Q_{\eps}\,,
\end{align}
and $A = Q_1PQ_1 + Q_2PQ_2$ and $B= QPQ$ as above. 

The operator $f(A)-f(B)$ is trace-class since $A-B = -Q_1PQ_2-Q_2PQ_1 = -T -T^* \in \mathcal S_1$ by \autoref{T in Sp}. The same is true for the operator $f(A_\eps)-f(B_\eps)$ since $A_\eps-B_\eps = -\varphi_{1,\eps}(X)P\varphi_{2,\eps}(X) - \varphi_{2,\eps}(X)P\varphi_{1,\eps}(X)\in\mathcal S_1$ by the general version of \autoref{T in Sp} with $f_1=\varphi_{1,\eps}$ and $f_2=\varphi_{2,\eps}$. 

Using \eqref{Fourier rep}, we write
\begin{align}
[f(A_\eps)-&f(B_\eps) - f(A)+f(B)]\\
&= \int_\R \dd t\, t \hat{f}(t) \int_0^1\dd s\, \left(\mathrm{e}^{\mathrm{i} st A_\eps}(A_\eps-B_\eps) \,\mathrm{e}^{\mathrm{i}(1-s)tB_\eps} - \mathrm{e}^{\mathrm{i} st A}(A-B) \,\mathrm{e}^{\mathrm{i}(1-s)tB}\right). \label{3.5}
\end{align}
We want to show that this integral converges to $0$ in trace norm as $\eps \to 0$. As $\lVert t \hat{f}(t) \rVert_{\mathsf{L}^1(\R)} < \infty$, by dominated convergence, it suffices to show that, in trace norm, the integrand is uniformly bounded and converges pointwise to $0$. To this end, we will use the following 

\begin{lemma} \label{strong conv and trace norm lem}
 Let $S_n,Z_n$ be two sequences of uniformly (in operator norm) bounded operators on a Hilbert space $(\mathcal H,\|\cdot\|_{\mathcal H})$ and let $Z_n$ converge to $0$ strongly. Furthermore, let $\mathcal R$ be a trace class operator on $\mathcal H$. Then, the operator sequences $Z_n \mathcal R S_n$ and $S_n \mathcal R Z_n $ are uniformly bounded in $\mathcal S_1$ and converge to $0$ in $\mathcal S_1$. 
\end{lemma}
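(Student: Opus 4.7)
The plan is to first establish the uniform $\mathcal S_1$-bound, which is immediate from the ideal property of trace-class operators, and then to handle the convergence by approximating $\mathcal R$ by finite-rank operators, for which the rank-one structure lets strong convergence be applied term-by-term.

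For the uniform bound, H\"older's inequality for Schatten norms yields
\[
\|Z_n \mathcal R S_n\|_{\mathcal S_1} \le \|Z_n\|\cdot\|\mathcal R\|_{\mathcal S_1}\cdot\|S_n\| \le C^2\,\|\mathcal R\|_{\mathcal S_1}\,,
\]
with $C\coloneqq \sup_n\max(\|Z_n\|,\|S_n\|) < \infty$; the bound for $S_n\mathcal R Z_n$ is identical.

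For the convergence, I would use the Schmidt decomposition $\mathcal R x = \sum_{k=1}^\infty s_k \langle x, v_k\rangle\, u_k$, with $\{u_k\},\{v_k\}\subset\mathcal H$ orthonormal and $\sum_k s_k = \|\mathcal R\|_{\mathcal S_1}$, and truncate it at index $N$ to obtain a finite-rank operator $\mathcal R_N$ with $\|\mathcal R-\mathcal R_N\|_{\mathcal S_1}=\sum_{k>N}s_k$, which can be made arbitrarily small. The tail $Z_n(\mathcal R-\mathcal R_N)S_n$ is then controlled uniformly in $n$ by the first step. For the finite-rank part, $Z_n\mathcal R_N S_n$ is a finite sum of rank-one operators $x\mapsto s_k\langle x,S_n^* v_k\rangle\, Z_n u_k$, the trace norm of each summand equaling $s_k\|Z_n u_k\|\cdot\|S_n^* v_k\| \le Cs_k\|Z_n u_k\|$. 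Since for each fixed $k$ we have $\|Z_n u_k\|\to 0$ and the sum has only $N$ terms, the trace norm of $Z_n\mathcal R_N S_n$ tends to $0$ as $n\to\infty$. A standard $\varepsilon/2+\varepsilon/2$ argument then yields $\|Z_n\mathcal R S_n\|_{\mathcal S_1}\to 0$.

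For the companion statement $S_n\mathcal R Z_n$, the analogous decomposition produces the bound $\sum_{k=1}^N s_k\|S_n u_k\|\cdot\|Z_n^* v_k\|$, so one needs $Z_n^*\to 0$ strongly as well. The main obstacle of the argument lies precisely here, since strong convergence of $Z_n$ does not in general imply strong convergence of $Z_n^*$. However, in the lemma's intended application to~\eqref{3.5}, the sequence $Z_n$ consists of differences of unitary exponentials such as $\mathrm{e}^{\mathrm{i}t A_{\eps_n}}-\mathrm{e}^{\mathrm{i}t A}$ with $A_{\eps_n}, A$ self-adjoint, so the adjoints $\mathrm{e}^{-\mathrm{i}tA_{\eps_n}}-\mathrm{e}^{-\mathrm{i}tA}$ likewise converge strongly to $0$ by the same reasoning; thus the statement as needed in the proof of~\autoref{main thm} goes through without change.
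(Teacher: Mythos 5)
Your proof is correct and, for the main case $Z_n\mathcal R S_n$, essentially coincides with the paper's: the paper bounds $\|Z_n\mathcal R\|_{\mathcal S_1}\le\sum_m\|Z_n\mathcal R\psi_m\|_{\mathcal H}$ over an orthonormal system of singular vectors and concludes by dominated convergence, which is the same finite-rank approximation that you implement by truncating the Schmidt series at level $N$ and running an $\varepsilon/2+\varepsilon/2$ argument. Your remark about $S_n\mathcal R Z_n$ is a genuine catch rather than a weakness of your argument: the paper disposes of that case via the identity $(Z_n\mathcal R S_n)^*=S_n^*\mathcal R^*Z_n^*$ and the adjoint-invariance of the trace norm, but this reduction tacitly requires $Z_n^*\to0$ strongly, which does not follow from strong convergence of $Z_n$; in fact the $S_n\mathcal R Z_n$ half of the lemma fails as stated, for instance on $\ell^2(\N)$ with $Z_nx\coloneqq\langle e_n,x\rangle e_1$, $\mathcal Rx\coloneqq\langle e_1,x\rangle e_1$ and $S_n\coloneqq\mathds{1}$, where $Z_n\to0$ strongly yet $\|S_n\mathcal R Z_n\|_{\mathcal S_1}=\|e_1\|_{\mathcal H}\,\|e_n\|_{\mathcal H}=1$ for all $n$. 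The hypothesis should therefore also demand strong convergence of $Z_n^*$; as you observe, this additional property holds for every $Z_n$ occurring in the application to \eqref{3.5} (differences of exponentials of uniformly bounded self-adjoint operators converging strongly, and the self-adjoint differences $Q_{j,\eps}-Q_j$), so the use of the lemma in the proof of \autoref{main thm} is unaffected.
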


For the sake of completeness, we provide a quick proof of this well-known fact. See \cite[Theorem A.1]{GPS} for a slightly different proof. 
\begin{proof}
As $(Z_n \mathcal R S_n )^*= S_n^* \mathcal R^* Z_n^*$, it suffices to show that the claim holds for $Z_n \mathcal R S_n$. We observe (recall, $\|Z\|$ stands for the operator norm of $Z$)
\[ \| Z_n \mathcal R S_n \|_{\mathcal S_1} \le \| Z_n \mathcal R \|_{\mathcal S_1} \,\sup_{n \in  \N} \| S_n \| \le \sup_{n\in \N} \| Z_n \| \,\| \mathcal R \|_{\mathcal S_1} \sup_{n\in \N} \,\| S_n \| \,. \]
Thus, the sequence is uniformly bounded in $\mathcal S_1$ and it suffices to show that $Z_n \mathcal R \to 0$ in $\mathcal S_1$. For this, let $(\psi_m)_{m \in \N}$ be an orthonormal basis of the orthogonal complement of the kernel of $\mathcal R$, which recovers the singular values of $\mathcal R$, meaning $\| \mathcal R \psi_m \|_{\mathcal H} = s_m(\mathcal R)$. Let $P_m$ be the projection onto $\psi_m$. We shall now use the triangle inequality to obtain 
\[
 \| Z_n \mathcal R \|_{\mathcal S_1} = \Big\| Z_n \mathcal R  \sum_{m \in \N} P_m  \Big\|_{\mathcal S_1} \le \sum_{m \in \N} \| Z_n \mathcal R P_m \|_{\mathcal S_1} = \sum_{m \in \N} \| Z_n \mathcal R \psi_m \|_{\mathcal H} \, . 
\]
As $n \to \infty$, the $m^{\rm{th}}$ entry of this sum converges to $0$, as $Z_n$ converges strongly to $0$. As the $m^{\rm{th}}$ entry is always bounded by $\sup_{n \in \N} \| Z_n \| \,s_m(\mathcal R)$, the sum converges to $0$ by dominated convergence.
\end{proof}

Let us get back to the integrand in \eqref{3.5}. Let $\tau\coloneqq st$ and $\tau'\coloneqq (1-s)t$. Then, 
\begin{align} \mathrm{e}^{\mathrm{i} \tau A_\eps}(A_\eps-B_\eps) \,\mathrm{e}^{\mathrm{i}\tau'B_\eps} - \mathrm{e}^{\mathrm{i} \tau A}(A-B) \,\mathrm{e}^{\mathrm{i}\tau'B} &=\mathrm{e}^{\mathrm{i} \tau A_\eps}(A_\eps-B_\eps-A+B) \,\mathrm{e}^{\mathrm{i}\tau'B_\eps}\label{1st term}
\\
&+(\mathrm{e}^{\mathrm{i} \tau A_\eps} - \mathrm{e}^{\mathrm{i} \tau A})(A-B)\,\mathrm{e}^{\mathrm{i}\tau'B_\eps}\label{2nd term}
\\
&+\mathrm{e}^{\mathrm{i} \tau A}(A-B)(\mathrm{e}^{i \tau' B_\eps} - \mathrm{e}^{\mathrm{i} \tau' B})\,.\label{3rd term}
\end{align}
We shall first see that the second term \eqref{2nd term} is of the form mentioned above. The operator $A-B$ is trace class. We note that $A_\eps$ converges to $A$ strongly as $\eps\to0$. This is because as a multiplication operator, $(\varphi_{\eps}-1_I)\psi$ converges in $\mathsf{L}^2(\R)$ to 0 for any $\psi\in\mathsf{L}^2(\R)$. Therefore, also $\mathrm{e}^{\mathrm{i}\tau A_\eps}$ tends to $\mathrm{e}^{\mathrm{i}\tau A}$ strongly. Moreover, the sequence $\mathrm{e}^{\mathrm{i}\tau A_\eps} - \mathrm{e}^{\mathrm{i}\tau A}$ is uniformly bounded by 2. Finally, $\mathrm e^{\mathrm i \tau' B_\eps}$ is a unitary operator. Thus, we can apply \autoref{strong conv and trace norm lem} for this summand. For the third term \eqref{3rd term} the same arguments apply.

Finally, for the trace norm of the first term on the right-hand side of \eqref{1st term} we estimate

\begin{align*}
\big\|\mathrm{e}^{\mathrm{i} \tau A_\eps}&(A_\eps-B_\eps-A+B) \mathrm{e}^{\mathrm{i}\tau'B_\eps}\big\|_{\mathcal S_1} \le \big\|A_\eps-B_\eps-A+B\big\|_{\mathcal S_1}
\\
&\le\big\|Q_{1,\eps} P Q_{2,\eps} - Q_{1} P Q_{2}\big\|_{\mathcal S_1} + \big\|Q_{2,\eps} P Q_{1,\eps} - Q_{2} P Q_{1}\big\|_{\mathcal S_1} \\
&=2\big\|Q_{1,\eps} P Q_{2,\eps} - Q_1PQ_2\big\|_{\mathcal S_1}\\
&\le 2 \big\|(Q_{1,\eps} -Q_1)Q_1P Q_2 Q_{2,\eps}\big\|_{\mathcal S_1} + 2 \big\|Q_{1} P Q_{2}(Q_{2,\eps}-Q_2)\big\|_{\mathcal S_1}\,.
\end{align*}
We used that $Q_{j,\eps}=Q_{j,\eps} Q_j$, as $\operatorname{supp}(\varphi_{j,\eps}) \subset I_j$. Here, the operator $\mathcal R = Q_1PQ_2=T$ is trace class by \autoref{T in Sp}. The uniform boundedness follows as $0 \le \varphi_{j, \eps} \le 1$ and we already established the strong convergence of $Q_{j,\eps}-Q_j$ to $0$. Thus, the terms inside the trace norm converge to $0$ in trace norm due to \autoref{strong conv and trace norm lem} and thus, we have shown that the integrand in \eqref{3.5} converges pointwise to $0$ in trace norm and is uniformly bounded in trace norm, which implies that the integral converges to $0$ using dominated convergence.

For the next step we use the following
\begin{lemma} Suppose that $\hat{f}\in\mathsf{L}^1(\R)\cap\mathsf{L}^2(\R)$, $t\mapsto t^2\hat{f}(t)\in \mathsf{L}^1(\R)$, and $f(0)=0$. Then, with the above definitions of $Q_{1,\eps}$, $Q_{2,\eps}$, $Q_{\eps} = Q_{1,\eps}+Q_{2,\eps}$, and $P$, we have
\begin{align} \tr &\big[f(Q_{1,\eps}PQ_{1,\eps}) + f(Q_{2,\eps} PQ_{2,\eps}) - f(Q_\eps PQ_\eps)\big] \label{E_n}
\\&= \tr\big[f(PQ_{1,\eps}^2P) + f(PQ_{2,\eps}^2P) - f(PQ_\eps^2 P)\big]\label{F_n}
\\
&= \label{3.17a}\tr\big[f(PQ_{1,\eps}^2P) - Pf(Q_{1,\eps}^2)P\big]  + \tr\big[f(PQ_{2,\eps}^2P) - Pf(Q_{2,\eps}^2)P)\big] 
\\
&- \tr \big[f(PQ_\eps^2 P) - Pf(Q_\eps^2) P)\big]\,.\label{3.17b}
\end{align}
\end{lemma}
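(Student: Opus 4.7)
Set $M_j := P Q_{j,\eps}$ for $j = 1, 2$, and $M := P Q_\eps = M_1 + M_2$. Since $\varphi_{1,\eps}$ and $\varphi_{2,\eps}$ have disjoint supports, $Q_{1,\eps} Q_{2,\eps} = 0$, which yields the crucial orthogonality $M_j M_k^* = P Q_{j,\eps} Q_{k,\eps} P = 0$ for $j \neq k$, and $Q_\eps^2 = Q_{1,\eps}^2 + Q_{2,\eps}^2$ with $Q_{1,\eps}^2 Q_{2,\eps}^2 = 0$. The four operators in the lemma then read $Q_{j,\eps} P Q_{j,\eps} = M_j^* M_j$, $P Q_{j,\eps}^2 P = M_j M_j^*$, $Q_\eps P Q_\eps = M^* M$, and $P Q_\eps^2 P = M M^*$. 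Writing $A := M_1^* M_1 + M_2^* M_2$, $B := M^* M$, $\tilde A_j := M_j M_j^*$, and $\tilde B := M M^* = \tilde A_1 + \tilde A_2$, the commutation $(M_1^* M_1)(M_2^* M_2) = 0$ together with $f(0) = 0$ gives $f(A) = f(M_1^* M_1) + f(M_2^* M_2)$, so the trace on line~\eqref{E_n} equals $\tr[f(A) - f(B)]$.

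\textbf{First equality.} I plan to use the Fourier representation $f(X) = (2\pi)^{-1}\int \hat f(t)(e^{itX} - I)\,\dd t$ (valid by $\hat f \in \mathsf{L}^1(\R)$ and $f(0) = 0$), combined with the cyclic operator identity $e^{\mathrm{i} t N^* N} - I = N^* g_t(N N^*) N$, where $g_t(x) := (e^{\mathrm{i}tx}-1)/x$. Applied to $N = M_j$ (the cross-term in $e^{\mathrm{i}tA}$ vanishes thanks to $M_1 M_2^* = 0$) and to $N = M$, one gets
\[
e^{\mathrm{i}tA} - I = M_1^* g_t(\tilde A_1) M_1 + M_2^* g_t(\tilde A_2) M_2, \qquad e^{\mathrm{i}tB} - I = M^* g_t(\tilde B) M.
\]
Cyclicity applied to the trace-class difference $e^{\mathrm{i}tA} - e^{\mathrm{i}tB}$ together with $g_t(x) x = e^{\mathrm{i}tx} - 1$ then yields
\[
\tr(e^{\mathrm{i}tA} - e^{\mathrm{i}tB}) = \tr\big[(e^{\mathrm{i}t\tilde A_1} - I) + (e^{\mathrm{i}t\tilde A_2} - I) - (e^{\mathrm{i}t\tilde B} - I)\big].
\]
Integrating against $\hat f(t)/(2\pi)$ and exchanging trace with integral by Fubini --- justified since $A - B = -(M_1^* M_2 + M_2^* M_1) \in \mathcal S_1$ by \autoref{T in Sp}, which gives $\|e^{\mathrm{i}tA} - e^{\mathrm{i}tB}\|_{\mathcal S_1} \le |t|\,\|A-B\|_{\mathcal S_1}$, combined with $t \hat f \in \mathsf{L}^1(\R)$ (a consequence of $\hat f, t^2 \hat f \in \mathsf{L}^1$) --- produces the first equality.

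\textbf{Second equality.} Because $Q_\eps^2 = Q_{1,\eps}^2 + Q_{2,\eps}^2$ with $Q_{1,\eps}^2 Q_{2,\eps}^2 = 0$ and $f(0) = 0$, the operator identity $f(Q_\eps^2) = f(Q_{1,\eps}^2) + f(Q_{2,\eps}^2)$ holds, so $P f(Q_{1,\eps}^2) P + P f(Q_{2,\eps}^2) P - P f(Q_\eps^2) P = 0$ on $\mathsf{L}^2(\R)$. Inserting this zero inside the trace on line~\eqref{F_n} and regrouping recovers \eqref{3.17a}+\eqref{3.17b}. Splitting the resulting expression into three separate traces is legitimate because each Wiener--Hopf-type difference $f(P Q_{j,\eps}^2 P) - P f(Q_{j,\eps}^2) P$ and $f(P Q_\eps^2 P) - P f(Q_\eps^2) P$ is trace class by Widom's theorem~\eqref{Widom}, applied to the smooth, compactly supported symbols $\varphi_{j,\eps}^2$ and $\varphi_\eps^2$ together with the test function $f$, whose decay $\xi^2 \hat f \in \mathsf{L}^1(\R)$ matches Widom's hypothesis.

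\textbf{Main obstacle.} The technical heart is the cyclicity step in the first equality: the operators $M_j^* g_t(\tilde A_j) M_j$ and $e^{\mathrm{i}t\tilde A_j} - I$ are \emph{not} individually of trace class (e.g.\ $\tr \tilde A_j = +\infty$), so cyclicity cannot be invoked term by term. The plan is to keep track of the bookkeeping through the block operators $\mathbf N := \binom{M_1}{M_2} : \mathsf{L}^2(\R) \to \mathsf{L}^2(\R)^{\oplus 2}$ and $\mathbf M := \binom{M}{0}$, for which $\mathbf N^* \mathbf N = A$, $\mathbf M^* \mathbf M = B$, $\mathbf N \mathbf N^* = \mathrm{diag}(\tilde A_1, \tilde A_2)$ and $\mathbf M \mathbf M^* = \mathrm{diag}(\tilde B, 0)$, and then to cycle on $e^{\mathrm{i}tA} - e^{\mathrm{i}tB}$ as a single trace-class operator --- introducing, if necessary, a smooth spectral truncation replacing $g_t$ by $g_t\chi_R$ and letting $R \to \infty$ at the end --- so that cyclicity is invoked only on genuinely trace-class pieces built from the $\mathcal S_1$-input $M_j^* M_k = Q_{j,\eps} P Q_{k,\eps}$ ($j \neq k$) supplied by \autoref{T in Sp}.
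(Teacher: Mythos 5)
Your high-level architecture (Fourier representation, the $N^*N\leftrightarrow NN^*$ intertwining via $M_j=PQ_{j,\eps}$, and Widom's theorem to split the final traces) is sound, and your treatment of the second equality coincides with the paper's. However, the first equality has a genuine gap exactly at the point you label the ``main obstacle'', and neither of your proposed repairs closes it. The block bookkeeping does not reduce matters to cycling a single trace-class operator: with $\mathbf N=\binom{M_1}{M_2}$ and $\mathbf M=\binom{M}{0}$ one gets $\mathbf N\mathbf N^*-\mathbf M\mathbf M^*=\operatorname{diag}\big(-PQ_{2,\eps}^2P,\;PQ_{2,\eps}^2P\big)$, and $PQ_{2,\eps}^2P$ is unitarily equivalent to the Wiener--Hopf operator $W(\varphi_{2,\eps}^2)$, which is not even compact; so $e^{\mathrm it\mathbf N\mathbf N^*}-e^{\mathrm it\mathbf M\mathbf M^*}$ is not trace class and no Duhamel bound is available on that side. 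The truncation $g_t\chi_R$ with $R\to\infty$ is vacuous here: every operator involved has spectrum in $[0,1]$, where $g_t$ is already smooth and bounded; the obstruction is the non-compactness of $\mathbf N$ and $\mathbf M$, not the growth of $g_t$.

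Concretely, two ingredients are missing. (i) The operator $e^{\mathrm it PQ_{1,\eps}^2P}+e^{\mathrm it PQ_{2,\eps}^2P}-e^{\mathrm it PQ_{\eps}^2P}-\mathds 1$ must be shown to be trace class before its trace can be equated with anything. On the left-hand side this is cheap because $Q_{1,\eps}PQ_{1,\eps}$ and $Q_{2,\eps}PQ_{2,\eps}$ have orthogonal supports, so the three exponentials collapse to a difference of functions of two operators whose difference $-Q_{1,\eps}PQ_{2,\eps}-Q_{2,\eps}PQ_{1,\eps}$ is trace class by \autoref{T in Sp}; on the right-hand side $PQ_{1,\eps}^2P$ and $PQ_{2,\eps}^2P$ do \emph{not} commute, and the paper has to pass to Fourier space and invoke Widom's identity \eqref{exp W} together with the Hankel estimate \eqref{Hankel estimate} to obtain the bound $\|G(x)\|_{\mathcal S_1}\le C|x|^2$. (ii) The cyclicity itself: if you make your $g_t$-identity rigorous by Taylor expansion (the natural route), you land term by term on the monomial identities $\tr E_n=\tr F_n$, and to interchange the series with the trace you need the linear-in-$n$ bounds \eqref{norm of En and Fn}, which the paper extracts from the recursions \eqref{recursion 1}--\eqref{recursion 2}, where every remainder carries an explicit factor $Q_{1,\eps}PQ_{2,\eps}\in\mathcal S_1$. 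A rigorous implementation of your plan is thus forced back onto the paper's polynomial recursion; as written, the proposal asserts the conclusion of that recursion without supplying it.
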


\begin{remark} Below we will consider $f\in\mathsf{C}_c^3(\R)$. By \eqref{Sobolev norm} the above integrability conditions on $f$ are then satisfied.
\end{remark}

\begin{proof} We begin this proof with the polynomial test functions $f(t)= t^n$, $n\in\N$. Strictly speaking, no such function satisfies the integrability conditions stated in the Lemma. However, we may replace $f$ by $f\varphi$, where $\varphi\in\mathsf{C}_c^3(\R)$ with support in $[-1,2]$ and the property $\varphi(t) = 1$ for $t\in[0,1]$. The function $f\varphi$ satisfies all the asked integrability conditions. Since all the involved operators $A\in\{Q_{1,\eps}PQ_{1,\eps},\ldots,PQ_\eps^2 P\}$ have their spectra inside $[0,1]$, we have $f(A) = (f\varphi)(A)$ for these operators.

For each $n\in\N$, let us introduce the operators
\begin{align}\label{def E_n} 
E_n&\coloneqq (Q_{1,\eps}PQ_{1,\eps})^n + (Q_{2,\eps} PQ_{2,\eps})^n - (Q_\eps PQ_\eps)^n\,,
\\\label{def F_n} 
F_n&\coloneqq (PQ_{1,\eps}^2P)^n + (PQ_{2,\eps}^2 P)^n - (PQ_\eps^2 P)^n\,.
\end{align}
Then these are the operators inside the traces in \eqref{E_n} and \eqref{F_n}. We need to show that $E_n$ and $F_n$ are trace class and have the same trace. Note for a start that $E_1 = - Q_{1,\eps}PQ_{2,\eps} - Q_{2,\eps}PQ_{1,\eps}$ is trace class by \autoref{T in Sp} with $f_1 = \varphi_{1,\eps}$ and $f_2 = \varphi_{2,\eps}$ and vice versa. 
The operator $F_1=0$ is trivially trace class and with the same (zero) trace as $E_1$ (see the arguments below \eqref{trace equality}).

For general $n\in\N$, we have the following recursion relations
\begin{align} PQ_\eps E_n Q_\eps P &= F_{n+1}\,,\label{recursion 1}
\\
Q_\eps P F_n P Q_\eps & = E_{n+1}\label{recursion 2} 
\\
&+ Q_{1,\eps}P(PQ_{1,\eps}^2P)^nPQ_{2,\eps} + Q_{1,\eps}P(PQ_{2,\eps}^2P)^nPQ_{2,\eps}\label{remainder 1} 
\\
&+Q_{2,\eps}P(PQ_{1,\eps}^2P)^nPQ_{1,\eps} + Q_{2,\eps}P(PQ_{2,\eps}^2P)^nPQ_{1,\eps}\label{remainder 2} 
\\
&+Q_{2,\eps}P(PQ_{1,\eps}^2P)^nPQ_{2,\eps} + Q_{1,\eps}P(PQ_{2,\eps}^2P)^nPQ_{1,\eps}\,.\label{remainder 3} 
\end{align}
To prove these relations, we merely use $Q_\eps Q_{j,\eps} = Q_{j,\eps}$ for $j\in\{1,2\}$. 

Let us first consider the remainder terms in (\ref{remainder 1}--\ref{remainder 3}). All of them contain the factor $T_{\eps}\coloneqq Q_{1,\eps}PQ_{2,\eps}= \varphi_{1,\eps}(X) P \varphi_{2,\eps}(X)$ or the adjoints $T_{\eps}^*$. These operators are trace class by \autoref{T in Sp}. The remaining factors are all bounded (in operator norm) by 1. Therefore, by the H\"older inequality,
\begin{align*} \big\|\eqref{remainder 1}\big\|_{\mathcal S_1}& \le \big\|Q_{1,\eps}P(PQ_{1,\eps}^2P)^{n-1}P Q_{1,\eps}T_{\eps}\big\|_{\mathcal S_1} + \big\|T_{\eps}Q_{2,\eps}P(PQ_{2,\eps}^2P)^{n-1}PQ_{2,\eps}\big\|_{\mathcal S_1}
\\
&\le 2\big\|T_{\eps}\big\|_{\mathcal S_1} \eqqcolon C/3\,.
\end{align*}
The same bound applies to the other two remainder terms so that
\be \label{remainder estimate} \big\|\eqref{remainder 1} + \eqref{remainder 2} + \eqref{remainder 3} \big\|_{\mathcal S_1} \le C\,,
\ee
independent of $n$. Together with the above recursion relations, this yields the bounds
\begin{align} \label{norm of En and Fn}
\|E_n\|_{\mathcal S_1} \le C n\,,\quad \|F_n\|_{\mathcal S_1} \le C n
\end{align}
for some constant $C$ as we shall show now. Equation \eqref{recursion 1} and the H\"older inequality gives the bound (i) $\|F_{n+1}\|_{\mathcal S_1} \le \|E_n\|_{\mathcal S_1}$. Equation \eqref{recursion 2} and the above estimate on the remainder terms show the bound (ii) $\|E_{n+1}\|_{\mathcal S_1} \le \|F_n\|_{\mathcal S_1} + C$ with the constant $C$ in \eqref{remainder estimate}. We combine and iterate the inequalities (i) and (ii) and obtain
\[ \|F_{n+1}\|_{\mathcal S_1} \le \|F_{n-1}\|_{\mathcal S_1} + C \le \|F_{n-3}\|_{\mathcal S_1} + 2C \le \ldots\,.
\]
We end up either at $F_1=0$ or at $F_2 = -PQ_{1,\eps}^2PQ_{2,\eps}^2P - PQ_{2,\eps}^2PQ_{1,\eps}^2P$. By \autoref{T in Sp},
\[ \big\|F_2\big\|_{\mathcal S_1} = 2 \big\|P Q_{1,\eps}Q_{1,\eps}PQ_{2,\eps}Q_{2,\eps}P\big\|_{\mathcal S_1} \le 2 \big\|Q_{1,\eps}PQ_{2,\eps}\big\|_{\mathcal S_1} = 2 \big\|T_{\eps}\big\|_{\mathcal S_1} \le C<\infty\,.
\]
This proves the 1-norm estimate on $F_n$ in \eqref{norm of En and Fn}. But since we also have the estimate $\|E_{n+1}\|_{\mathcal S_1} \le \|F_n\|_{\mathcal S_1} + C$ we have also proved the first bound in \eqref{norm of En and Fn} on the 1-norms of $E_n$.

Now we come to the computation of the trace of $E_n$ and of $F_n$. In the first step we use the cyclicity $\tr (AB) = \tr(BA)$, provided that $AB$ and $BA$ are both trace class. Here, $A=PQ_{\eps}$ and $B=E_nQ_\eps P$ are, of course, bounded. The operators $AB$ and $BA$ are indeed trace class because $E_n$ is trace class. Then, by \eqref{recursion 1}
\begin{align} \label{trace equality}&\tr F_{n+1}=\tr \big[PQ_{\eps}E_nQ_\eps P\big] \nonumber
\\
&=\tr\big[E_n Q_\eps PQ_\eps\big]\nonumber
\\
&=\tr\big[(Q_{1,\eps}PQ_{1,\eps})^n Q_{1,\eps}P(Q_{1,\eps}+Q_{2,\eps}) + (Q_{2,\eps}PQ_{2,\eps})^n Q_{2,\eps}P(Q_{1,\eps}+Q_{2,\eps}) - (Q_\eps PQ_\eps)^{n+1}\big]\nonumber
\\
&=\tr\big[E_{n+1} + (Q_{1,\eps}PQ_{1,\eps})^n Q_{1,\eps}PQ_{2,\eps} + (Q_{2,\eps}PQ_{2,\eps})^n Q_{2,\eps}PQ_{1,\eps}\big]\nonumber
\\
&=\tr E_{n+1}\,.
\end{align}
The last equality follows since the trace-class operators $(Q_{j,\eps}PQ_{j,\eps})^n Q_{j,\eps}PQ_{j',\eps}$ with $j\not=j'$ have zero trace. There are various ways to see the latter. One is to use that the smooth integral kernel vanishes on the diagonal and apply Mercer's theorem, see \cite{B88}. Another one uses an orthonormal basis $(\psi_n)_{n\in\N}$ with support either in $I_1$, in $I_2$, or in $(I_1\cup I_2)^\complement$ so that any $\langle \psi_n,(Q_{j,\eps}PQ_{j,\eps})^n Q_{j,\eps}PQ_{j',\eps}\psi_n\rangle = 0$. A third method is to use cyclicity and permute $Q_{j,\eps}$ inside the trace to hit $Q_{j',\eps}$ so that $Q_{j,\eps}Q_{j',\eps} = 0$ appears as a factor. 

In the general case with $\hat{f}\in\mathsf{L}^1(\R)\cap\mathsf{L}^2(\R)$, we use the Fourier representation
\[ f(A) = \frac{1}{2\pi} \int_\R \dd x \,\hat{f}(x) \,\mathrm{e}^{\mathrm{i} x A}
\]
for the bounded operators $A$ equal to $Q_{1,\eps}PQ_{2,\eps}$, $Q_{2,\eps}PQ_{2,\eps}$ and $Q_{\eps}PQ_{\eps}$ and expand the exponentials. That leads to 

\begin{align}\tr&\big[f(Q_{1,\eps}PQ_{1,\eps}) + f(Q_{2,\eps}PQ_{2,\eps}) - f(Q_{\eps}PQ_{\eps})\big]
\\
&=\frac{1}{2\pi} \int_\R \dd x \,\hat{f}(x) \,\tr\big[ \mathrm{e}^{\mathrm{i} x Q_{1,\eps}PQ_{1,\eps}} +  \mathrm{e}^{\mathrm{i} x Q_{2,\eps}PQ_{2,\eps}} -  \mathrm{e}^{\mathrm{i} x Q_{\eps}PQ_{\eps}}- \mathds{1} \big] \label{3.20}
\\
&=\frac{1}{2\pi} \int_\R \dd x \,\hat{f}(x) \sum_{n\ge1} \frac{(\mathrm{i}x)^n}{n!}\tr\big[(Q_{1,\eps}PQ_{1,\eps})^n + (Q_{2,\eps}PQ_{2,\eps})^n - (Q_{\eps}PQ_{\eps})^n\big]\label{3.21}
\\
&=\frac{1}{2\pi} \int_\R \dd x \,\hat{f}(x) \sum_{n\ge1} \frac{(\mathrm{i}x)^n}{n!}\tr\big[(PQ_{1,\eps}^2P)^n + (PQ_{2,\eps}^2P)^n - (PQ_\eps^2P)^n\big] \label{3.22}
\\
&=\frac{1}{2\pi} \int_\R \dd x \,\hat{f}(x) \sum_{n\ge1} \frac{(\mathrm{i}x)^n}{n!}\tr\big[P(PQ_{1,\eps}^2P)^n + P(PQ_{2,\eps}^2P)^n - P(PQ_\eps^2P)^n\big]
\\
&=\frac{1}{2\pi} \int_\R \dd x \,\hat{f}(x)\,\tr\big[P\mathrm{e}^{\mathrm{i} x PQ_{1,\eps}^2P} +  P\mathrm{e}^{\mathrm{i} x PQ_{2,\eps}^2P} -  P\mathrm{e}^{\mathrm{i} x PQ_\eps^2P}- P \big] \label{3.23}
\\
&=\tr\big[Pf(PQ_{1,\eps}^2P) + Pf(PQ_{2,\eps}^2P)-Pf(PQ_\eps^2P)\big]\label{3.23b}
\\
&=\tr\big[f(PQ_{1,\eps}^2P) + f(PQ_{2,\eps}^2P)-f(PQ_\eps^2P)\big]\label{3.24}\,.
\end{align}
In the last step we have used that for any orthogonal projection $P$ and any self-adjoint, bounded operator $A$ on the Hilbert space $\mathcal H$, we have for $u\in\ker(P)$,
\begin{align*} (\mathds{1}-P) f(PAP) u = (\mathds{1}-P)f(0)u = f(0)u\,.
\end{align*}
This is zero as $f(0)=0$. On the other hand, if $Pu=u$, then $f(PAP)u\in P\mathcal H$ and $(\mathds{1}-P)f(PAP)u=0$. Any $u\in\mathcal H$ can be written as $u = (\mathds{1}-P)u + Pu$ with $(\mathds{1}-P)u \in\ker(P)$. Hence, $f(PAP) = Pf(PAP) + (\mathds{1}-P)f(PAP) = Pf(PAP)$.

In the following, we present the arguments that justify all the above interchanges of traces, integrals and sums. In the very first step and in \eqref{3.23b} we used that $\frac{1}{2\pi}\int_\R \dd x \hat{f}(x) = f(0) = 0$ so that we could smuggle in the unity operator $\mathds{1}$ and get rid of $P$, respectively. Then, for the interchange of the trace and the integral we used $t\mapsto t^2\hat{f}(t) \in\mathsf{L}^1(\R)$ and that $A(x)\coloneqq \mathrm{e}^{\mathrm{i} x Q_{1,\eps}PQ_{1,\eps}} +  \mathrm{e}^{\mathrm{i} x Q_{2,\eps}PQ_{2,\eps}} -  \mathrm{e}^{\mathrm{i} x Q_{\eps}PQ_{\eps}} - \mathds{1}$ is trace class. To see the latter, we write
\begin{align} A(x)&=\exp\big(\mathrm{i} x(Q_{1,\eps}PQ_{1,\eps}+Q_{2,\eps}PQ_{2,\eps})\big) - \exp\big(\mathrm{i} x Q_{\eps}PQ_{\eps}\big)\nonumber
\\
&= e_x(Q_{1,\eps}PQ_{1,\eps}+Q_{2,\eps}PQ_{2,\eps}) - e_x(Q_{\eps}PQ_{\eps})
\end{align}
with a function $e_x\in\mathsf{C}^2_c(\R)$ defined as
\begin{align} e_x(t)\coloneqq \left\{\begin{array}{ll} \exp(\mathrm{i} x t) &\mbox{ for } t\in [0,1]\\0&\mbox{ for } t\in \R\setminus [-1,2]\end{array}\right.\,.
\end{align} 
To be more precise, let $\varphi\in\mathsf{C}^3(\R)$ be a function of compact support equal to  $[-1,2]$ and equal to 1 on $[0,1]$. Then, $e_x(t) = \varphi(t) \exp(\mathrm{i} xt)$ and the $\|\cdot\|_{0,0}$-norm of $e_x$ defined in \eqref{def norm} is finite with 
\[ \|e_x\|_{0,0} \le C (1+ |x| + |x|^2)\le C(1+|x|^2)\,.
\]
An application of inequality \eqref{inequ Sobolev} tells us that with some constant $C$,
\begin{align} \label{A inequality}
\|A(x)\|_{\mathcal S_1} &\le C \|e_x\|_{0,0} \big\|Q_{1,\eps}PQ_{2,\eps} + Q_{2,\eps}PQ_{1,\eps}\big\|_{\mathcal S_1} \le C (1 + |x|^2) \big\|Q_{1,\eps}PQ_{2,\eps}\big\|_{\mathcal S_1}\nonumber
\\
&\le C (1 + |x|^2)
\end{align}
by \autoref{T in Sp} in the last step. 
Altogether,
\[ \int_\R \dd x \, |\hat{f}(x)| \,\|A(x)\|_{\mathcal S_1} \le C \int_\R \dd  x \, |\hat{f}(x)| \,(1+|x|^2) <  \infty
\]
by our assumption on $f$. Thus we have justified \eqref{3.20}. 

In order to arrive at the next equality, \eqref{3.21}, we interchange the Taylor series of the exponentials for fixed $x$, which are well-defined, with the trace. It is clear that the series starts at $n=1$. Let 
\begin{align} B(x) &\coloneqq \sum_{n\ge1} \frac{(\mathrm{i} x)^n}{n!} \big[(Q_{1,\eps}PQ_{1,\eps})^n + (Q_{2,\eps}PQ_{2,\eps})^n - (Q_{\eps}PQ_{\eps})^n\big]=\sum_{n\ge1} \frac{(\mathrm{i} x)^n}{n!} E_n\,.
\end{align}
We use the bound \eqref{norm of En and Fn} to show that
\begin{align} \|B(x)\|_{\mathcal S_1} &\le \sum_{n\ge1} \frac{|x|^n}{n!} \, \big\|E_n\big\|_{\mathcal S_1}\le C \sum_{n\ge1} \frac{|x|^n}{n!} n = C |x|\exp(|x|)<\infty\,.
\end{align}
This bound is sufficient as $x$ is fixed and thus justifies \eqref{3.21}. 

In order to get to \eqref{3.22} we use \eqref{norm of En and Fn}. Then,
\begin{align} 
|\hat{f}(x)| \, \Big\|\sum_{n\ge1} \frac{(\mathrm{i} x)^n}{n!} F_{n}\Big\|_{\mathcal S_1} &\le |\hat{f}(x)| \, \sum_{n\ge1} \frac{|x|^n}{n!} \big\|F_{n}\big\|_{\mathcal S_1} \le C |\hat{f}(x)| \,\sum_{n\ge1} \frac{|x|^n}{n!} n = C  |\hat{f}(x)| \,|x|\exp(|x|)\,.
\end{align}
Again, this bound is sufficient for fixed $x$. For the last step to go from \eqref{3.23} to \eqref{3.23b} and bring the trace in front of the integral we define for $x\in\R$ the operator
\begin{align}G(x)\coloneqq P\mathrm{e}^{\mathrm{i} x PQ_{1,\eps}^2P} +  P\mathrm{e}^{\mathrm{i} x PQ_{2,\eps}^2P} -  P\mathrm{e}^{\mathrm{i} x PQ_\eps^2P}- P\,.
\end{align}
We go into Fourier space and write
\begin{align} \mathcal F^{-1} \, G(x) \,\mathcal F &= 1_{\R^+}\exp\big(\mathrm{i} x W(\varphi_{1,\eps}^2)\big) +  1_{\R^+}\exp\big(\mathrm{i} x W(\varphi_{2,\eps}^2)\big) -  1_{\R^+}\exp\big(\mathrm{i} x W(\varphi_{\eps}^2)\big) - 1_{\R^+} 
\end{align}
with the Wiener--Hopf operators $W(\cdot)$ as defined in \eqref{WH}. We claim that $G(x)$, or equivalently $\mathcal F^{-1} G(x)\,\mathcal F$, is trace class. To see this, we use \eqref{exp W} for each exponential with the Hankel operators $H(\cdot)$ as defined in \eqref{Hankel}. Then,
\begin{align} \mathcal F^{-1}\,G(x)\,\mathcal F &= W\big(\exp(\mathrm{i}x \varphi_{1,\eps}^2)\big) + W\big(\exp(\mathrm{i}x \varphi_{2,\eps}^2)\big) -W\big(\exp(\mathrm{i}x \varphi_\eps^2)\big) - 1_{\R^+}\label{3.43}
\\
&-\mathrm{i}\int_0^x \dd y\, \Big[H\Big(\exp\big(\mathrm{i}y\varphi_{1,\eps}^2\big),\varphi_{1,\eps}^2\Big)\,\exp\big(\mathrm{i}(x-y)W(\varphi_{1,\eps}^2)\big) 
\\
&\phantom{\mathrm{i}\int_0^x \dd y\,\Big]} + H\Big(\exp\big(\mathrm{i}y\varphi_{2,\eps}^2\big),\varphi_{2,\eps}^2\Big)\,\exp\big(\mathrm{i}(x-y)W(\varphi_{2,\eps}^2)\big) 
\\
&\phantom{\mathrm{i}\int_0^x \dd y\,\Big]}- H\Big(\exp\big(\mathrm{i}y\varphi_\eps^2\big),\varphi_\eps^2\Big)\,\exp\big(\mathrm{i}(x-y)W(\varphi_\eps^2)\big)\Big] \,.
\end{align}
The whole expression on the right-hand side of \eqref{3.43} vanishes because it is a Wiener--Hopf operator with the symbol $\exp\big(\mathrm{i}x\varphi_{1,\eps}^2\big) + \exp\big(\mathrm{i}x\varphi_{2,\eps}^2\big) - \exp\big(\mathrm{i}x\varphi_{\eps}^2\big) -1$ identically equal to 0. The trace norm of the remaining terms are estimated as follows:
\begin{align} \Big\|\int_0^x \dd y\, &\Big[H\Big(\exp\big(\mathrm{i}y\varphi_{1,\eps}^2\big),\varphi_{1,\eps}^2\Big)\,\exp\big(\mathrm{i}(x-y)W(\varphi_{1,\eps}^2)\big)\Big\|_{\mathcal S_1}
\\
&\le \int_0^{|x|} \dd y\, \Big\|H\Big(\exp\big(\mathrm{i}y\varphi_{1,\eps}^2\big),\varphi_{1,\eps}^2\Big)\Big\|_{\mathcal S_1} \, \Big\|\exp\big(\mathrm{i}(x-y)W(\varphi_{1,\eps}^2)\big)\Big\|
\\
&\le \int_0^{|x|} \dd y\, \interleave \exp\big(\mathrm{i}y\varphi_{1,\eps}^2\big)\interleave \, \interleave \varphi_{1,\eps}^2\interleave
\\
&\le \int_0^{|x|} \dd y\, |y|\,\interleave \varphi_{1,\eps}^2\interleave^2 =\frac12 |x|^2 \, \interleave \varphi_{1,\eps}^2\interleave^2\,.
\end{align}
Here, we have used \eqref{1.13}. The Besov-norm of $\varphi_{1,\eps}^2$ is finite since $\eps>0$. The same arguments apply to the two other Hankel operators involved. Altogether, $\|G(x)\|_{\mathcal S_1} \le C |x|^2$, where $C$ depends on $\eps$, and
\[ \int_\R \dd x\, |\hat{f}(x)| \, \|G(x)\|_{\mathcal S_1} \le C  \int_\R \dd x\, |\hat{f}(x)| \, |x|^2 <\infty
\]
by our assumption on $f$. This allows for interchanging the trace and the integral and we have justified all steps. 

To prove the last claim of this lemma we insert $Pf(Q_{1,\eps}^2)P + Pf(Q_{2,\eps}^2)P - Pf(Q_\eps^2)P = 0$ and then, 
\begin{align*} f(PQ_{1,\eps}^2P) &- Pf(Q_{1,\eps}^2)P +  f(PQ_{2,\eps}^2P) - Pf(Q_{2,\eps}^2)P -  f(PQ_{\eps}^2P) + Pf(Q_{\eps}^2)P\\
& = D(\varphi_{1,\eps}^2;f) + D(\varphi_{2,\eps}^2;f) - D(\varphi_{\eps}^2;f)\,.
\end{align*}
By Widom's result, stated below \eqref{def D(a;f)}, each single operator difference in \eqref{3.17a} and \eqref{3.17b} --- such as $f(PQ_{1,\eps}^2P) - Pf(Q_{1,\eps}^2)P$ --- is trace class. By linearity of the trace we can put the trace in front of each of these three terms. This finishes the proof.
\end{proof}

This is now in the form where we can apply Widom's formula. That is, 
\be \eqref{3.17a} + \eqref{3.17b} = \mathcal B(\varphi_{1,\eps}^2;f) + \mathcal B(\varphi_{2,\eps}^2;f) - \mathcal B(\varphi_\eps^2;f)\,.
\ee
It remains to compute its limit $\eps\to0$. This is content of 

\begin{lemma} \label{trace smooth} Suppose $f\in\mathsf{C}_c^{3}(\R)$ and $f(0)=0$. Then, under the above conditions on the functions $\varphi_{1,\eps}$ and $\varphi_{1,\eps}$ we have
\begin{align} \lim_{\eps\to0} \big[\mathcal B(\varphi_{1,\eps}^2;f) + \mathcal B(\varphi_{2,\eps}^2;f) - \mathcal B(\varphi_\eps^2;f)\big] = \frac{U(0,1;f)}{2\pi^2} \ln\Big[\frac{(b_2-b_1)(a_2-a_1)}{(a_2-b_1)(b_2-a_1)}\Big]\,.
\end{align}
\end{lemma}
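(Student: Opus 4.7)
The plan is to expand the three $\mathcal B$-terms via Widom's formula~\eqref{Widom} and exploit that the smoothed symbols have disjoint supports. Since $\operatorname{supp}(\varphi_{j,\eps})\subset I_j$ with $\operatorname{dist}(I_1,I_2)>0$, one has $\varphi_{1,\eps}\varphi_{2,\eps}\equiv 0$ and hence $\varphi_\eps^2=\varphi_{1,\eps}^2+\varphi_{2,\eps}^2$, with at most one summand nonzero at each point. I would split $\R\times\R$ into the nine regions according to which of $\operatorname{supp}(\varphi_{1,\eps})$, $\operatorname{supp}(\varphi_{2,\eps})$, or neither contains each of $\xi_1,\xi_2$. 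Using $U(0,0;f)=0$ (from $f(0)=0$) and the symmetry $U(\sigma_1,\sigma_2;f)=U(\sigma_2,\sigma_1;f)$ (obtained by the substitution $t\mapsto 1-t$ in~\eqref{def U}), a direct inspection shows that the combined integrand
$$U\bigl(\varphi_{1,\eps}^2(\xi_1),\varphi_{1,\eps}^2(\xi_2);f\bigr)+U\bigl(\varphi_{2,\eps}^2(\xi_1),\varphi_{2,\eps}^2(\xi_2);f\bigr)-U\bigl(\varphi_\eps^2(\xi_1),\varphi_\eps^2(\xi_2);f\bigr)$$
vanishes on seven of the nine regions. The only surviving contribution comes from the two mixed regions where $\xi_1,\xi_2$ lie in different supports; on them the integrand reduces to
$$V_\eps(\xi_1,\xi_2)\coloneqq U\bigl(\varphi_{1,\eps}^2(\xi_1),0;f\bigr)+U\bigl(0,\varphi_{2,\eps}^2(\xi_2);f\bigr)-U\bigl(\varphi_{1,\eps}^2(\xi_1),\varphi_{2,\eps}^2(\xi_2);f\bigr),$$
and the two mixed regions contribute equally by the symmetry of $1/(\xi_1-\xi_2)^2$.

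Next I would pass to the limit $\eps\to 0$ inside the integral by dominated convergence. Since $\operatorname{dist}(I_1,I_2)>0$ and $I_1$ is bounded, the kernel $1/(\xi_1-\xi_2)^2$ is integrable over $\overline{I_1}\times\overline{I_2}$ (decaying like $|\xi_2|^{-2}$ should $I_2$ be unbounded), while $|V_\eps|\leq 3\sup_{[0,1]^2}|U(\cdot,\cdot;f)|<\infty$ uniformly in $\eps$ because $f\in\mathsf{C}_c^3$. This yields an integrable dominator. Pointwise on $I_1\times I_2$, $\varphi_{j,\eps}^2\to 1$, so $V_\eps\to U(1,0;f)+U(0,1;f)-U(1,1;f)$. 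A direct computation from~\eqref{def U} gives $U(1,1;f)=0$ and, using $f(0)=0$ together with $t\mapsto 1-t$, $U(1,0;f)=U(0,1;f)$. Hence the limiting integrand equals $2U(0,1;f)$ on $I_1\times I_2$.

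It remains to evaluate the elementary double integral
$$\int_{a_1}^{b_1}\dd\xi_1\int_{a_2}^{b_2}\dd\xi_2\,\frac{1}{(\xi_1-\xi_2)^2}=\ln\Big[\frac{(a_2-a_1)(b_2-b_1)}{(a_2-b_1)(b_2-a_1)}\Big]$$
by antidifferentiating first in $\xi_2$ and then in $\xi_1$; the ratio stays finite even when an endpoint of $I_2$ is $\pm\infty$ since the divergent factors cancel. Multiplying by the overall prefactor $\frac{1}{8\pi^2}\cdot 2\cdot 2U(0,1;f)=\frac{U(0,1;f)}{2\pi^2}$ produces the claimed identity. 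The only real obstacle is the bookkeeping of the nine regions together with confirming the symmetry $U(1,0;f)=U(0,1;f)$; the dominated convergence step is routine thanks to the strict positivity of $\operatorname{dist}(I_1,I_2)$.
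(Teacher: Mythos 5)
Your proposal is correct and follows essentially the same route as the paper: the same decomposition of $\R\times\R$ into regions determined by $I_1$, $I_2$ and their complement, the observation that the combined integrand cancels everywhere except on the two mixed regions where it tends to $2U(0,1;f)$, and the same elementary evaluation of $\int_{I_1}\int_{I_2}(\xi_1-\xi_2)^{-2}$. Your dominated-convergence justification is a welcome bit of extra care (the paper only remarks that the limit ``poses no problem''); the sole nitpick is that $U(\sigma,\sigma;f)=0$ holds for every $\sigma$ directly from \eqref{def U}, without needing $f(0)=0$.
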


\begin{proof} We recall the definition of $\mathcal B(a;f)$ from \eqref{Widom} and of the function $U$ from \eqref{def U}. To perform the integration with respect to $\xi_1$ and $\xi_2$, we split the real axis into the three regions $I_1$, $I_2$ and $\R\setminus (I_1\cup I_2)$. Note that $U(\s_1,\s_2;f) = 0$ if $\s_1=\s_2$. Hence, $U\big(\varphi_{j,\eps}^2(\xi_1),\varphi_{j,\eps}^2(\xi_2);f\big) = 0$ if both $\xi_1$ and $\xi_2$ are in $I_{j,\eps}$ or both are in $\R\setminus I_j$, for $j\in\{1,2\}$, because the first two entries of $U$ are either both $1$ or both $0$. Similarly, $U\big(\varphi_{\eps}^2(\xi_1),\varphi_\eps^2(\xi_2);f\big) = 0$ if both $\xi_1$ and $\xi_2$ are in $I_{1,\eps}\cup I_{2,\eps}$ or both are in $\R\setminus (I_1\cup I_2)$.

By the symmetry $U(\s_1,\s_2;f) = U(\s_2,\s_1;f)$ we can thus write 
\begin{align*} &\mathcal B(\varphi_{1,\eps}^2;f) + \mathcal B(\varphi_{2,\eps}^2;f)
\\
&=\frac{1}{4\pi^2}\int_{I_1}\dd \xi_1\int_{\R\setminus I_1}\dd \xi_2\, \frac{U\big(\varphi_{1,\eps}^2(\xi_1),\varphi_{1,\eps}^2(\xi_2);f\big)}{(\xi_1-\xi_2)^2} + \frac{1}{4\pi^2}\int_{I_2}\dd \xi_1\int_{\R\setminus I_2}\dd \xi_2\, \frac{U\big(\varphi_{2,\eps}^2(\xi_1),\varphi_{2,\eps}^2(\xi_2);f\big)}{(\xi_1-\xi_2)^2}
\end{align*}
and
\begin{align*} \mathcal B(\varphi_{\eps}^2;f)&=\frac{1}{4\pi^2}\int_{I_1\cup I_2}\dd \xi_1\int_{\R\setminus (I_1\cup I_2)}\dd \xi_2\, \frac{U\big(\varphi_{\eps}^2(\xi_1),\varphi_{\eps}^2(\xi_2);f\big)}{(\xi_1-\xi_2)^2}\,.
\end{align*}
In the difference $\big(\mathcal B(\varphi_{1,\eps}^2;f) + \mathcal B(\varphi_{2,\eps}^2;f)\big) - \mathcal B(\varphi_{\eps}^2;f)$ we are left with the integration over $(\xi_1\in I_1,\xi_2\in I_2)$ and $(\xi_1\in I_2,\xi_2\in I_1)$. There are no singular terms as $I_1$ and $I_2$ have a positive distance and the limit $\eps\to0$ poses no problem. Therefore,
\begin{align}&\lim_{\eps\to0} \big[\mathcal B(\varphi_{1,\eps}^2;f) + \mathcal B(\varphi_{2,\eps}^2;f) - \mathcal B(\varphi_\eps^2;f)\big]
\\
&=\frac{U(1,0;f)}{4\pi^2}\int_{a_1}^{b_1} \dd\xi_1\int_{a_2}^{b_2}\mathrm{d}\xi_2\, \frac{1}{(\xi_1-\xi_2)^2} + \frac{U(1,0;f)}{4\pi^2} \int_{a_2}^{b_2} \mathrm{d}\xi_1 \int_{a_1}^{b_1} \mathrm{d}\xi_2\, \frac{1}{(\xi_1-\xi_2)^2}
\\
&=\frac{U(0,1;f)}{2\pi^2} \int_{a_1}^{b_1} \mathrm{d}\xi_1 \left(-\frac{1}{b_2-\xi_1} + \frac{1}{a_2-\xi_1}\right)
\\
&=\frac{U(0,1;f)}{2\pi^2} \ln\Big[\frac{(b_2-b_1)(a_2-a_1)}{(a_2-b_1)(b_2-a_1)}\Big]\,.
\end{align}

\end{proof}

Finally, we present the
\begin{proof}[Proof of \autoref{main thm}] First of all, we use \eqref{q inequality} which shows that $\D(I_1,I_2;f)\in\mathcal S_1$. By continuity \eqref{continuity}, we only need to compute the trace with the ``smooth" test function $f(1-\zeta_\d)\in \mathsf{C}_c^3(\R)$, which we just accomplished in \autoref{trace smooth}; the function $\zeta_\d$ is localized around the points in $\mathcal X$.
\end{proof}

\section{Discussion of results}

\subsection{Integral representation of R\'enyi entropies \texorpdfstring{$h_\a$ for $\a\le1$}{h\_^^ce^^b1 for  ^^ce^^b1  ^^e2 ^^89 ^^a4   1}}
\label{section: master thesis}

As mentioned, the computation of Casini--Huerta and the proof of Longo--Xu rely on an integral representation of the von-Neumann entropy $h_1$ and on the operator-concavity\footnote{A real-valued, measurable function $f$ on $\R$ is called operator concave if for all bounded self-adjoint operators $A$ and $B$ on a Hilbert space $(\mathcal H,\langle\cdot,\cdot\rangle)$ and real numbers $t\in(0,1)$, $f(tA+ (1-t)B)\ge t f(A) + (1-t) f(B)$. Positivity of operators is meant as the positivity of the inner product, that is, $A\ge 0$ if $\langle \varphi, A\varphi\rangle\ge0$ for all $\varphi\in \mathcal H$.} property of the function $h_1$; see \cite[Lemma 3.4]{LSS23} for a proof of this property. This can be extended to the $\a$-\Renyi{} entropies for $\a\in(0,1)$. Here, we only provide a short sketch, while we refer to Ref.~\cite{Ferro22} for detailed proofs of our results. 

To develop a suitable integral representation of the \Renyi{} entropy function, we interpret $h_{\a}$ as an analytic function defined on the complex domain $D\coloneqq\Cp\cup\Cm\cup (0,1)$, taking the principal branch of the logarithm $\ln(z)$ and of the power function $z^{\a}$, where we denoted by $\Cpm:=\{z\in\C|\pm\mathrm{imag}({z)}>0\}$ the open upper and lower complex half-planes with the real line removed.

We recall (see \cite{Donoghue1974}) that any analytic self-map in the upper complex half-plane $m:\Cp\rightarrow\C$ with $m(\Cp)\subseteq\overline{\Pi}_+$ is a \NH{} function (NH in short) and it may be uniquely represented by the sum of a linear and an integral term. Rearranging the argument of the logarithm in the complex \Renyi{} entropy function, we rewrite it as
\begin{align}\label{eq:Htildealphalogsum}
h_{\a}(z)&=\frac{1}{1-\a}\ln\left(z^\a\left(1+\left(\frac{1-z}{z}\right)^\a\right)\right)
=\frac{\a}{1-\a}\ln({z})+\frac{1}{1-\a}L_{\a}(-z)\,,
\end{align}
where the function $L_{\a}:z\mapsto\ln\left(1+\left(\frac{1+z}{-z}\right)^{\a}\right)$
is defined on the same complex domain as $h_{\a}$. The function $z\mapsto\frac{1}{-z}$,
the logarithm $\ln({z})$ and, for $\a\in (0,1)$, the power function $z\mapsto z^{\a}$ are NH, 
and therefore the function $L_{\a}$ is NH for $\a\in (0,1)$. Combining the unique \NH{} representations of the logarithm and of $L_{\a}$,
it follows from \eqref{eq:Htildealphalogsum} that the complex R\'enyi function $h_{\a}$ on $D$ admits the integral representation:
\begin{align}\label{eq:Ha_integral_representation}
h_{\a}(z)=\frac{B({\a})}{1-\a}-\frac{1}{1-\a}\int^{+\infty}_{\frac{1}{2}}\mathrm{d}\lambda\,&
f_{\a}(\lambda)\Bigg(R_z(\lambda)-R_z(-\lambda)
+\frac{\frac{1}{2}-\lambda}{\left(\frac{1}{2}-\lambda\right)^2+1}
-\frac{\frac{1}{2}+\lambda}{\left(\frac{1}{2}+\lambda\right)^2+1}
\Bigg)\,,
\end{align}
where
\begin{align}\label{eq:beta_Renyi_term2_result}
B(\a)=\frac{1}{2}\ln\left(1+2^{\a}+2^{\frac{\a}{2}+1}\cos\Big(\frac{3}{4}\a\pi\Big)\right),\hspace{0.5cm}
f_{\a}(\lambda):=\frac{1}{\pi}\arctan\left(\frac{\left(\frac{2\lambda-1}{2\lambda+1}\right)^{\a}\sin({\a\pi})}
{1+\left(\frac{2\lambda-1}{2\lambda+1}\right)^{\a}\cos({\a\pi})}\right)\,,
\end{align}
and $R_z$ is the function
\begin{align}\label{eq:resolvent}
R_z:\R\setminus\left(-\frac{1}{2},\frac{1}{2}\right)\rightarrow\C\,,\hspace{0.5cm}
\lambda\mapsto R_z(\lambda):=\frac{1}{z-\frac{1}{2}+\lambda}\,.
\end{align}

\begin{remarks}
\begin{enumerate}
\item The whole information concerning the entropy function is summarized in the function
$\frac{f_{\a}}{1-\a}$ which may be interpreted as the derivative of the generating function
of a measure in the \NH{} integral representation. On the other hand, the
term $\frac{B(\a)}{1-\a}$ as well as the remaining integral terms
independent of $z$ are of little interest, since they are constant and therefore
they do not contribute to the R\'enyi entanglement entropy.

\item For R\'enyi indices $\a>1$, the method described above cannot be employed.
The difficulty arises from the fact that, in this case, the main branch of the complex
power function $z\mapsto z^{\a}$ is not NH.

\item In the von-Neumann limit $\a\uparrow1$, the integral representation in~\eqref{eq:Ha_integral_representation}
reduces to the formula provided in Ref.~\cite{CH05}, for $t\in(0,1)$,
\begin{align}\label{von Neumann}
h_1(t) =  -t\ln t-(1-t)\ln(1-t)&=-\int^{+\infty}_{\frac{1}{2}}\mathrm{d}\lambda\,
\left(\left(\lambda-\frac{1}{2}\right)\left(R_t(\lambda)-R_t(-\lambda)\right)
-\frac{2\lambda}{\lambda+\frac{1}{2}}\right)\,.
\end{align}
\end{enumerate}
\end{remarks}

A key aspect of the integral representation \eqref{von Neumann} lies in the difference $R_t(\lambda)-R_t(-\lambda)$ in the integrand. When examining the operator $h_1(P(\mathcal I))$, where $P(\mathcal I)$ was defined in~\eqref{restricted}, the two functions $R_t(\pm\lambda)$
give rise to the resolvent operators of $P(\mathcal I)$ at the points $\frac{1}{2}\mp\lambda$, which are known explicitly for $\mathcal I$ being a union of intervals as in \autoref{theorem:many_intervals}. This is crucial in the proof provided in Refs.~\cite{CH05,LX18}.
Our integral representation~\eqref{eq:Ha_integral_representation} also contains the same difference
$R_t(\lambda)-R_t(-\lambda)$ in the integral. The proof of Longo--Xu can be carried out replacing $h_1$ with $h_{\a}$ and hence we proved \autoref{theorem:many_intervals} for the R\'enyi entropy functions $h_\a$ with $\a\in(0,1)$, see \cite{Ferro22}.

\subsection{Small and large separation of intervals}
An interesting aspect of formula \eqref{formula} is the behavior when the two intervals get separated by a large distance or get close together. Let us elaborate this a bit. In the first case, let us assume that $I_1=(a_1,b_1)$ with $a_1<b_1$ and $I_2=(a_2+r,b_2+r)$ with $r>0$, $a_2<b_2<\infty$ and let us look at the asymptotics of the $\ln$-coefficient as $r\to\infty$. It turns out that as $r\to\infty$,
\begin{align*}  
\frac{(a_2+r-a_1)(b_2+r-b_1)}{(a_2+r-b_1)(b_2+r-a_1)}&=1+\frac{1}{r^2} (b_1-a_1)(b_2-a_2) + O(r^{-3})
\end{align*}
so that 
\[\ln\Big[\frac{(a_2+r-a_1)(b_2+r-b_1)}{(a_2+r-b_1)(b_2+r-a_1)}\Big] = \frac{1}{r^2} |I_1| |I_2| + O(r^{-3})\,.
\]
If one of the intervals is unbounded, say $I_2 = [a_2,\infty)$, then the $\ln$-coefficient equals $\ln\big[\frac{a_2-a_1}{a_2-b_1}\big]$. If we now separate the two intervals as above and replace $a_2$ by $a_2+r$, then
\[ \ln\Big[\frac{a_2+r-a_1}{a_2+r-b_1}\Big] = \frac{1}{r} |I_1| + O(r^{-2})\,.
\]
This tells us that in the case of large separating intervals $I_1$ and $I_2$ as above, the R\'enyi entanglement entropy behaves to leading order in the separating distance $r$ as $(1+\a)/(12\a) |I_1||I_2|/r^2$, if both intervals are bounded and as $(1+\a)/(12\a) |I_1|/r$ if $I_1$ is bounded and $I_2$ is unbounded.

\medskip
On the other hand, if the intervals $I_1$ and $I_2$ get close to one another, the entanglememt entropy diverges logarithmically. That is, suppose that $I_1=(a_1,b_1)$ and $I_2=(b_1+\eps,b_2)$ with $\eps>0$. Then 
\[\frac{(b_1+\eps-a_1)(b_2-b_1)}{\eps(b_2-a_1)} = O(1/\eps)\,.
\]
In the case that the closures of the intervals overlap at just a single point $b_1$ (put $\eps=0$ above), then the crucial operator $1_{I_1}(X) P 1_{I_2}$ is not trace class and our further estimates are out of reach. It can be seen immediately that this operator is not Hilbert--Schmidt. In fact, it will not be in any Schatten--von Neumann class $\mathcal S_p$ for any $p<\infty$.

\subsection{A generalization to intersecting domains}

In the earlier paper \cite{CH04} (without the explicit formula for the von Neumann entropy), Casini and Huerta introduced the trace of the following operator
\be \label{def F}
F(\L,\L';f)\coloneqq f(P(\L)) + f(P(\L')) - f(P(\L\cap \L')) - f(P(\L\cup \L'))
\ee
in the case of the von-Neumann entropy function $f=h_1$ and for not necessarily disjoint Borel sets $\L,\L'$. See also the recent review \cite{Xu21}. They showed various nice properties of $\tr F(\L,\L';h_1)$ as a function of the sets $\L, \L'$, in particular, its positivity as a consequence of strong subadditivity of quantum mechanical entropy. The latter property was (first) proved by Lieb and Ruskai in \cite{LR}. In general, the $\a$-R\'enyi entropy is not strongly subadditive and positivity of the $\a$-R\'enyi entanglement entropy is hence not clear. In any case, not much is known in general whether $F(\L,\L';f)$ is trace class.

Let us mention the following property. Suppose that $f$ satisfies $f(0)=0$ (as usual) and the symmetry $f(t)=f(1-t)$, like for the R\'enyi entropies. Then, the operator $f(P(\L))$ has the same eigenvalues as $f(P(\L^\complement))$ including multiplicities (except for the eigenvalue 0) and informally we have that $\tr F(\L,\L';f) = \tr F(\L^\complement,(\L')^\complement;f)$. 

This implies (informally) that in the case of (disjoint) sets $\L$ and $\L'$ satisfying our \autoref{assump},
\begin{align*}  \tr \D(\L,\L';f) &= \tr F(\L,\L';f)  
\\
&= \tr\big[f(P(\L^\complement)) + f(P((\L')^\complement)) - f(P(\L^\complement\cap (\L')^\complement)) \big]
\\
&=\tr F(\L^\complement,(\L')^\complement;f) \,.
\end{align*}

There is a trivial but useful expression that relates the operator $F(\L,\L';f)$ to the difference of two $\D(\cdot,\cdot;f)$ operators with certain disjoint sets. More concretely, we have 

\begin{lemma} \label{F Delta} For any Borel subsets $\L$ and $\L'$ of $\R$ we have
\be F(\L,\L';f) = F(\L\setminus\L',\L';f) - F(\L\setminus\L',\L\cap\L';f) = \D(\L\setminus\L',\L';f) - \D(\L\setminus\L',\L\cap\L';f) \,.
\ee
Note that the sets in the arguments of $\D$ are disjoint, but not necessarily their closures.
\end{lemma}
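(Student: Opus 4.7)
The plan is to verify this identity by direct expansion, exploiting the disjoint decompositions $\L\cup\L' = (\L\setminus\L')\sqcup\L'$ and $\L = (\L\setminus\L')\sqcup(\L\cap\L')$. Throughout I write $A\coloneqq\L\setminus\L'$, $B\coloneqq\L'$, and $C\coloneqq\L\cap\L'$; the two set-theoretic facts I need are $A\cap B = A\cap C = \emptyset$, together with $A\cup B = \L\cup\L'$ and $A\cup C = \L$.

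First I would prove the second equality as a termwise consequence of the following simple reduction: whenever $X$ and $Y$ are disjoint Borel subsets of $\R$, one has $F(X,Y;f) = \D(X,Y;f)$. Indeed, comparing \eqref{Delta} and \eqref{def F}, the only difference between the two operators is the summand $f(P(X\cap Y))$, and $X\cap Y = \emptyset$ combined with the standing assumption $f(0)=0$ makes this operator vanish (since $P(\emptyset) = 0$ and hence $f(P(\emptyset))=0$). Applying this to $(X,Y) = (A,B)$ and $(X,Y) = (A,C)$ --- both disjoint pairs by construction --- converts the $F$-difference on the right-hand side into the claimed $\D$-difference.

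Next I would handle the first equality by expanding $F(A,B;f) - F(A,C;f)$ directly from \eqref{def F}. The two $f(P(A))$ contributions cancel; the two intersection terms $f(P(A\cap B))$ and $f(P(A\cap C))$ both vanish for the reason above; and substituting $A\cup B = \L\cup\L'$ and $A\cup C = \L$ in the remaining union terms leaves exactly $f(P(\L)) + f(P(\L')) - f(P(\L\cap\L')) - f(P(\L\cup\L'))$, which is $F(\L,\L';f)$ by definition.

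There is essentially no obstacle: the argument is bookkeeping with four operators, and the only subtlety is the innocuous use of $f(0)=0$ to annihilate intersection terms. I would emphasise, however, that the identity is established at the operator level and requires no trace-class hypothesis on the individual summands, which is exactly what makes it useful when $F(\L,\L';f)$ is trace class only as a particular combination.
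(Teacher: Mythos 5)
Your proposal is correct and follows essentially the same route as the paper: both establish the first equality by direct expansion of $F(\L\setminus\L',\L';f)-F(\L\setminus\L',\L\cap\L';f)$ using the cancellation of the $f\big(P(\L\setminus\L')\big)$ terms, the identities $(\L\setminus\L')\cup\L'=\L\cup\L'$ and $(\L\setminus\L')\cup(\L\cap\L')=\L$, and the vanishing of the empty-intersection terms via $f(0)=0$, and both obtain the second equality by noting that $F$ and $\D$ coincide on disjoint pairs. Your write-up is if anything slightly more explicit than the paper's about why $f(P(\emptyset))=0$, but the argument is the same.
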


We find it convenient to change our notation and write in the remaining of this section instead of $1_\L P 1_{\Lambda'}$ simply $\L P\L'$. 

\begin{proof} The first equality is (almost) tautological as 
\begin{align*} &F(\L\setminus\L',\L';f) - F(\L\setminus\L',\L\cap\L';f) 
\\
&= f\big((\L\setminus \L')P(\L\setminus \L')\big) + f(\L' P\L') - f\big(((\L\setminus\L')\cup \L')P((\L\setminus\L')\cup \L')\big) 
\\
&- f\big((\L\setminus \L')P(\L\setminus \L')\big) - f\big((\L\cap\L')P(\L\cap\L')\big) + f\big(((\L\setminus \L')\cup(\L\cap\L'))P((\L\setminus \L')\cup(\L\cap\L'))\big)
\\
&=f(\L' P\L') - f\big((\L\cup\L')P(\L\cup\L')\big) - f\big((\L\cap\L')P(\L\cap\L')\big) + f(\L P\L)
\\
&=F(\L,\L';f)\,.
\end{align*}
The second equality follows trivially from the definition.
\end{proof}

As in the main body of this paper, we assume that $\L$ and $\L'$ are intervals and we use the letters $I_1$ and $I_2$. 

\begin{conj} \label{conject} Suppose that the intersection of the closures of the (open) intervals $I_1$ and $I_2$ has positive (finite) Lebesgue measure. In addition, we assume that one of the non-empty sets $I_1\setminus I_2$ or $I_2\setminus I_1$ is bounded. Then, for $f$ satisfying the \autoref{assump}, the operator $F(I_1,I_2;f)$ defined in \eqref{def F} is trace class and
\be \label{trace of F}
\tr F(I_1,I_2;f) = \frac{U(0,1;f)}{2\pi^2} \ln \Big[\frac{|I_1||I_2|}{|I_1\cap I_2| |I_1\cup I_2|}\Big]\,.
\ee
\end{conj}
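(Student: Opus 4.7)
The plan is to apply \autoref{F Delta} and reduce to \autoref{main thm} via a gap approximation. By \autoref{F Delta},
\begin{equation*}
F(I_1,I_2;f) = \Delta(I_1\setminus I_2, I_2; f) - \Delta(I_1\setminus I_2, I_1\cap I_2; f)\,.
\end{equation*}
In the prototypical overlapping case $I_1 = (a_1,b_1)$, $I_2 = (a_2,b_2)$ with $a_1 < a_2 < b_1 < b_2$, write $J := I_1\setminus I_2 = (a_1,a_2)$. The pairs $(J, I_2)$ and $(J, I_1\cap I_2)$ share the endpoint $a_2$, so \autoref{main thm} is not directly applicable.

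Introduce $J^\eta := (a_1, a_2-\eta)$ for $\eta > 0$ small. Then $\overline{J^\eta}$ is separated from $\overline{I_2}$ and from $\overline{I_1\cap I_2}$, and \autoref{main thm} yields
\begin{equation*}
\tr\bigl[\Delta(J^\eta, I_2; f) - \Delta(J^\eta, I_1\cap I_2; f)\bigr] = \frac{U(0,1;f)}{2\pi^2}\,\ln\frac{(b_2-a_2+\eta)(b_1-a_1)}{(b_2-a_1)(b_1-a_2+\eta)}\,,
\end{equation*}
where the individually divergent $-\ln\eta$ contributions cancel exactly. The right-hand side converges as $\eta\to 0$ to the claimed expression $\frac{U(0,1;f)}{2\pi^2}\ln\bigl[|I_1||I_2|/(|I_1\cap I_2||I_1\cup I_2|)\bigr]$.

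The remaining (and primary) task is to show that the trace-class operator $X_\eta := \Delta(J^\eta, I_2; f) - \Delta(J^\eta, I_1\cap I_2; f)$ converges to $F(I_1,I_2;f)$ in $\mathcal S_1$. A direct computation yields
\begin{equation*}
X_\eta - F(I_1,I_2;f) = \bigl[f(P(I_1\cup I_2))-f(P(J^\eta\cup I_2))\bigr] - \bigl[f(P(I_1))-f(P(J^\eta\cup(I_1\cap I_2)))\bigr]\,,
\end{equation*}
and the corresponding projection-level double difference telescopes, with $K := 1_{(a_2-\eta,a_2)}$, to
\begin{equation*}
K\,P\,1_{(b_1,b_2)} + 1_{(b_1,b_2)}\,P\,K\,,
\end{equation*}
an off-diagonal operator on the separated intervals $(a_2-\eta,a_2)$ and $(b_1,b_2)$ that vanishes in $\mathcal S_1$ as $\eta\to 0$ by \autoref{strong conv and trace norm lem} (with $Z_\eta := 1_{(a_2-\eta,a_2)} \to 0$ strongly and the fixed $\mathcal S_1$ operator $\mathcal R := 1_{(a_2-\eta_0,a_2)} P\,1_{(b_1,b_2)}$ for some $\eta_0>0$, provided by \autoref{T in Sp}).

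The main obstacle is promoting this projection-level cancellation to the $f$-level. Individual differences $f(P(I_1\cup I_2)) - f(P(J^\eta\cup I_2))$ are \emph{not} in $\mathcal S_1$, because the underlying projection differences $P(I_1\cup I_2) - P(J^\eta\cup I_2)$ have kernel singularities near $a_2$; hence a naive add-and-subtract via the Fourier representation \eqref{Fourier rep} produces non-trace-class intermediate terms such as $(e^{istP(I_1\cup I_2)} - e^{istP(J^\eta\cup I_2)})$ multiplied by operators like $P(I_1) - P(J^\eta\cup(I_1\cap I_2))$, which by themselves are not even Hilbert--Schmidt. Resolving this requires a double-operator-integral representation of $f(A_\eta) - f(B_\eta) - f(A) + f(B)$ designed so that at every step only the trace-class combination $(A - A_\eta) - (B - B_\eta)$ enters, grouping the endpoint-touching pieces with the $\eta$-dependent ones before $f$ is applied --- an extension of the techniques in \autoref{section:smooth_Widom}. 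Once this is achieved, $F(I_1,I_2;f)$ is automatically trace class as the $\mathcal S_1$-limit of the $X_\eta$'s, and the stated trace formula follows by continuity of the trace.
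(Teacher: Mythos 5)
Your reduction via \autoref{F Delta} and the gap approximation $J^\eta=(a_1,a_2-\eta)$ is sound as far as it goes: the algebra expressing $X_\eta - F(I_1,I_2;f)$ as a double difference is correct, the projection-level telescoping to $K P 1_{(b_1,b_2)} + 1_{(b_1,b_2)} P K$ with $K=1_{(a_2-\eta,a_2)}$ is correct, and the limit of $\tr X_\eta$ does reproduce the right-hand side of \eqref{trace of F}. But the proof is not complete, and the missing step is precisely the substance of the statement. You need $X_\eta\to F(I_1,I_2;f)$ in $\mathcal S_1$ (or at least that $F(I_1,I_2;f)\in\mathcal S_1$ together with $\tr X_\eta\to\tr F$), and you correctly observe that the individual brackets $f(P(I_1\cup I_2))-f(P(J^\eta\cup I_2))$ are not trace class, so neither the Fourier representation \eqref{Fourier rep} nor the Sobolev bound \eqref{inequ Sobolev} can be applied to each bracket separately. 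At that point you write that resolving this ``requires a double-operator-integral representation \dots\ designed so that at every step only the trace-class combination $(A-A_\eta)-(B-B_\eta)$ enters'' and then assume such a representation exists. None is constructed, and none is available in the paper: \eqref{inequ Sobolev} controls $f(A)-f(B)$ by $A-B$, but there is no analogous estimate controlling the four-term combination $f(A)-f(A_\eta)-f(B)+f(B_\eta)$ by $(A-A_\eta)-(B-B_\eta)$ when the pairwise differences are individually non-trace-class. This is exactly why the statement is labelled a \emph{conjecture}: the remark following the paper's partial proof states that the authors ``did not succeed to deal with more general functions $f$'' and ``lack an inequality such as \eqref{inequ Sobolev}'' for this purpose, and they regard it as an open problem.

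For comparison, the paper proves the statement only for polynomial test functions $f(t)=t^m$. There the trace-class property of $F(I_1,I_2;t^m)$ is established \emph{first}, by expanding the powers and running three nested inductions (over $F$, $G_{m'}$ and $H_{m''}$) that exhibit every term as a product containing a factor of the form $1_{A}P1_{B}$ with $\operatorname{dist}(A,B)>0$; only afterwards is the trace computed by the same $\eps$-shrinking of $I_1\setminus I_2$ that you use, so that part of your argument agrees with the paper's. Your claim to handle all $f$ satisfying \autoref{assump}, however, rests on an unproven --- and, as far as the authors know, open --- trace-norm convergence statement, so the proposal does not establish the conjecture.
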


\begin{remarks}
\begin{enumerate}
\item 
This formula is in agreement with formula (17) in \cite{CH04} and with Theorem 5.1(1) in \cite{Xu21} for the von-Neumann entropy function $f=h_1$.
\item Under the given conditions, one of the intervals must be bounded. The right-hand side of \eqref{trace of F} is, of course, well-defined if both $I_1$ and $I_2$ are bounded. Suppose that $I_1$ is bounded and $I_2$ is unbounded, then the argument of the logarithm is understood as $\frac{|I_1|}{|I_1\cap I_2|}$. 
\end{enumerate}
\end{remarks}

\begin{proof}[Proof for polynomial test functions $f(t) = t^m$, $m\in\N$.] In the first part we show that $F(I_1,I_2;t^m)$ is trace class. Note that we cannot use \autoref{F Delta} here since $\D(I_1\setminus I_2,I_2;t^m)$ and $\D(I_1\setminus I_2,I_1\cap I_2;t^m)$ are not trace class. We will use that relation to compute the trace of $F(I_1,I_2;t^m)$ though. 

For a start, let us consider the linear test function $f(t)=t$. Then, by some simple calculations,
\begin{align} F(I_1,I_2;t) &= \D(I_1\setminus I_2,I_2;t) - \D(I_1\setminus I_2,I_1\cap I_2;t) \nonumber
\\
&= - (I_1\setminus I_2) P (I_2\setminus I_1) - (I_2\setminus I_1) P (I_1\setminus I_2) = -T-T^*\,,
\end{align}
where the operator $T\coloneqq (I_1\setminus I_2) P (I_2\setminus I_1)$ is defined as in \autoref{subsection polynomial} with $I_1$ replaced by $I_1\setminus I_2$ and $I_2$ replaced by $I_2\setminus I_1$. Note that there is a positive distance between the intervals $I_1\setminus I_2$ and $I_2\setminus I_1$ of $|I_1\cap I_2|$ and therefore $F(I_1,I_2;t)$ is in the Schatten class $\mathcal S_p$ for any $p>0$. 

Now we deal with the polynomial $f(t)=t^m$ by induction over $m$. We claim that $F(I_1,I_2;t^m)$ is trace class for every integer $m \ge 1$. We have just seen the starting case $m=1$. Thus, we consider
\begin{align}
&F(I_1,I_2;t^{m+1} ) \nonumber\\
&= (I_1PI_1)^m P (I_1 \cup I_2 - I_2\setminus I_1) +(I_2PI_2)^m P (I_1 \cup I_2 - I_1\setminus I_2) \nonumber \\
&-  [(I_1\cap I_2)P(I_1\cap I_2)]^m P(I_1\cup I_2- I_1\setminus I_2 - I_2 \setminus I_1 )  - [(I_1\cup I_2)P(I_1\cup I_2)]^m  P(I_1 \cup I_2) \nonumber\\
&= F(I_1,I_2; t^m) P (I_1 \cup I_2) + \big[ \big((I_1\cap I_2)P(I_1\cap I_2)\big)^m - (I_1PI_1)^m \big]  P  (I_2\setminus I_1)  \nonumber\\
&+ \big[ \big((I_1\cap I_2)P(I_1\cap I_2)\big)^m - (I_2PI_2)^m \big]  P  (I_1\setminus I_2) \, .
\end{align}
The first summand is in $\mathcal S_1$ by the induction hypothesis. The remaining two summands are of the same form except for the interchange of the indices $1$ and $2$. Thus, it suffices to show that one of them is trace class. To this end, we perform another induction over the statement that
\begin{align}
G_{m'}  \coloneqq \Big[( I_1  (I_1PI_1)^{m'} - ( I_1 \cap I_2)\big((I_1\cap I_2)P(I_1\cap I_2)\big)^{m'} \Big]  P  (I_2\setminus I_1) \in \mathcal S_1
\end{align}
for any $m' \ge 0$. The factors $I_1 \cap I_2$ and $I_1$ we smuggled in only change $G_0$, which is a more convenient induction start. We see that $G_0= (I_1 \setminus I_2 ) P (I_2 \setminus I_1) \in \mathcal S_1$. For the induction step, we consider
\begin{align}
G_{m'+1}&=  \Big[I_1 P  I_1  (I_1PI_1)^{m'} - ( I_1 \cap I_2) P ( I_1 \cap I_2)\big((I_1\cap I_2)P(I_1\cap I_2)\big)^{m'}  \Big]   P (I_2 \setminus I_1)\nonumber \\
&= I_1 P G_{m'} + ( I_1 \setminus I_2) P (I_1 \cap I_2) \big((I_1 \cap I_2 ) P (I_1 \cap I_2 )\big)^{m'} P (I_2 \setminus I_1 ) \, . \label{intersecting intervals ind proof eq1} 
\end{align}
Once more, the first term is trace class by the induction hypothesis and for the second summand, we perform yet another induction. The induction claim is that the operator
\begin{align}
H_{m''} \coloneqq   ( I_1 \setminus I_2) P \left[ P (I_1 \cap I_2 ) P \right] ^{m''} (I_2 \setminus I_1)  \in \mathcal S_1  
\end{align}
for any $m'' \ge 1$. The second summand in \eqref{intersecting intervals ind proof eq1}  is $H_{m'+1}$. For the case $m''=1$, we split the bounded interval $I_1 \cap I_2$ in the middle such that $\overline{I_1 \cap I_2} = \overline {J_1} \cup \overline {J_2}$ for certain (disjoint, bounded) intervals $J_1$ and $J_2$. Secondly, we choose $J_1$ and $J_2$ such that $\operatorname{dist}(I_1 \setminus I_2, J_1 ) = \operatorname{dist} (I_2 \setminus I_1, J_2)= \lvert I_1 \cap I_2 \rvert /2$. Thus, we have 
\begin{align}
\left \lVert H_1 \right \rVert_{\mathcal S_1} = \left \lVert ( I_1 \setminus I_2) P (J_1+J_2) P ( I_2 \setminus I_1) \right \rVert_{\mathcal S_1}  \le \left \lVert (I_1 \setminus I_2) P J_1 \right \rVert_{\mathcal S_1} + \left \lVert J_2 P (I_2 \setminus I_1) \right \rVert_{\mathcal S_1}   < \infty \, .
\end{align}
For the induction step, we introduce the intervals $J_3,J_4$, which are the connected components of $(I_1 \cup I_2 ) ^\complement$. One of them may be empty. Thus, we have $\mathds{1}=(I_1 \cap I_2) + (I_1 \setminus I_2)+ (I_2 \setminus I_1) + J_3 +J_4$. We note that $J_3$ and $J_4$ have positive distance from the finite interval $I_1 \cap I_2$, as $I_1 \setminus I_2$ and $I_2 \setminus I_1$ have positive length. We can now conclude the induction step,
\begin{align}
\left \lVert H_{m''+1} \right \rVert_{\mathcal S_1}  &=  \left \lVert ( I_1 \setminus I_2) P \big(  \mathds{1}- J_3-J_4-(I_1 \setminus I_2)- (I_2 \setminus I_1) \big) P   \left[ P (I_1 \cap I_2 ) P \right] ^{m''} (I_2 \setminus I_1)  \right \rVert_{\mathcal S_1}  \nonumber\\
&\le \left \lVert H_{m''}  \right \rVert_{\mathcal S_1} + \left \lVert  J_3 P (I_1\cap I_2) \right \rVert_{\mathcal S_1} + \left \lVert  J_4 P (I_1\cap I_2) \right \rVert_{\mathcal S_1}\nonumber\\
& + \left \lVert (I_1 \setminus I_2 ) P  H_{m''}  \right \rVert_{\mathcal S_1} + \left \lVert (I_1 \setminus I_2) P (I_2 \setminus I_1)  \right \rVert_{\mathcal S_1} < \infty \, .
\end{align}

In order to compute the trace of $F(I_1,I_2;t^m)$ we resort to the strong continuity of (the indicator functions) $(I_1\setminus I_2)_\eps$ to ${I_1\setminus I_2}$ as $\eps\to0$. If both intervals are bounded, we may assume without loss of generality that $I_1=(0,a)$, $I_2=(1,b)$ with $1<a<b$. Then $I_1\setminus I_2 =(0,1)$ and $I_1\cap I_2 = (1,a)$. For $0<\eps<1$, let $(I_1\setminus I_2)_\eps \coloneqq (0,1-\eps)$. Then, both $\D$-terms in $F$ are trace class and we know how to compute this trace by \autoref{main thm}, 
\begin{align} \tr F(I_1,I_2;t^m) &=\lim_{\eps\to0} \tr\big\{\D\big((I_1\setminus I_2)_\eps,I_2;t^m\big) - \D\big((I_1\setminus I_2)_\eps,I_1\cap I_2;t^m\big)\big\} \nonumber
\\
&= \frac{U(0,1;t^m)}{2\pi^2} \lim_{\eps\to0}\Big\{\ln \Big[\frac{b-1+\eps}{\eps b}\Big] - \ln \Big[\frac{a-1+\eps}{\eps a}\Big]\Big\}\nonumber
\\
&=\frac{U(0,1;t^m)}{2\pi^2} \ln\Big[\frac{a(b-1)}{(a-1)b}\Big]\,,
\end{align}
which is the claimed formula.

Suppose that $I_1$ is bounded and $I_2$ is unbounded. Then, we may assume that $I_1 = (0,a)$, $I_2 = (1,\infty)$ with $1<a$. By using \autoref{main thm} we obtain,
\begin{align} \tr F(I_1,I_2;t^m) &=\lim_{\eps\to0} \tr\big\{\D\big((I_1\setminus I_2)_\eps,I_2;t^m\big) - \D\big((I_1\setminus I_2)_\eps,I_1\cap I_2;t^m\big)\big\} \nonumber
\\
&= \frac{U(0,1;t^m)}{2\pi^2} \lim_{\eps\to0} \Big\{\ln\Big[\frac{1}{\eps}\Big] - \ln\Big[\frac{a-1}{\eps a}\Big]\Big\}\nonumber
\\
&=\frac{U(0,1;t^m)}{2\pi^2} \ln\Big[\frac{a}{a-1}\Big]\,.
\end{align}
This finishes the proof of \autoref{conject} for polynomial test functions.

\end{proof}

\begin{remark} We can generalize the trace-class property of $F(I_1,I_2;f)$ for polynomials $f$ to analytic functions but we did not succeed to deal with more general functions $f$ as in \autoref{assump}, for example by using the Fourier representation \eqref{Fourier rep} of $f$. And we lack an inequality such as \eqref{inequ Sobolev}, which could be applied to find a good bound on the 1-norm of $F(I_1,I_2;f)$ for such functions $f$. We consider this an interesting open problem. For the $\a$-R\'enyi entropy functions $h_\a$ with $\a\le1$ (including the von-Neumann case), we are confident that the trace-class property and the above continuity can be proved using the integral representation \eqref{eq:Ha_integral_representation} and the explicit knowledge of the involved resolvents and thus prove \autoref{conject} for these entropies.
\end{remark}

\begin{appendix}
\section{Proof of \autoref{theorem:many_intervals}}

We now consider the case of the union of finitely many intervals. First of all, the operator $\D(\mathcal I_1,\mathcal I_2;f)$ is trace class due to inequality \eqref{q inequality}. We just need to replace $I_j$ by $\mathcal I_j$ in the definition of the operator $T$ in \eqref{kernel T} and the constant $N_\b$ from the bound \eqref{T is in Sp eq} accordingly. We introduce $\mathcal I \coloneqq \mathcal I_1\cup\mathcal I_2$. Secondly, we will use the continuity \eqref{continuity} and compute the trace of $\D(\mathcal I_1,\mathcal I_2;f)$ for $f\in\mathsf{C}^2(\R)$ as in \autoref{section 3}. 

Let $I_k:=(a_k,b_k)$, $k\in\{1,\dots,N\}$, be $N\in\N$, $N\ge2$, open intervals with pairwise disjoint closures.
We now generalize the cutoff functions from Section~\ref{section:smooth_Widom} to the
union of $N$ disjoint intervals as $\varphi_{\{1,\dots,N\},\varepsilon}:=\sum_{k=1}^N \varphi_{k,\varepsilon}$. For a single interval $I_k$, we set $\varphi_{\{k\},\varepsilon}:=\varphi_{k,\varepsilon}$. Since the supports of the functions $\varphi_{k,\varepsilon}$, $1\leq k\leq N$, are disjoint,
we get $\varphi^2_{\{1,\dots,N\},\varepsilon}:=\sum_{k=1}^N \varphi^2_{k,\varepsilon}$.

We first prove the following elementary properties of the function $U$ from~\eqref{def U}. 

\begin{lemma}\label{lemma:Uunion}
Let $0<\varepsilon<\frac{1}{2}\min_{1\leq k\leq N}\left\{b_k-a_k\right\}$. Then:
\begin{enumerate}
\item\label{Uunion:point1} for both $\xi_1$ and $\xi_2\in\bigcup_{1\leq k\leq N}I_{k,\varepsilon}$:
\begin{align}
U\big(\varphi^2_{\{1,\dots,N\},\varepsilon}(\xi_1),\varphi^2_{\{1,\dots,N\},\varepsilon}(\xi_2);f\big)=0\,,\nonumber
\end{align}
\item\label{Uunion:point2} for both $\xi_1$ and $\xi_2\in\bigcap_{1\leq k\leq N}I_k^{\complement}$:
\begin{align}
U\big(\varphi_{\{1,\dots,N\},\varepsilon}^2(\xi_1),\varphi_{\{1,\dots,N\},\varepsilon}^2(\xi_2);f\big)=0\,,\nonumber
\end{align}
\item\label{Uunion:point3} for $\xi_1\in\bigcup_{1\leq k\leq N}I_{k,\varepsilon}$ and $\xi_2\in\bigcap_{1\leq k\leq N}I_k^{\complement}$:
\begin{align}
U\big(\varphi_{\{1,\dots,N\},\varepsilon}^2(\xi_1),\varphi_{\{1,\dots,N\},\varepsilon}^2(\xi_2);f\big)=U(1,0;f)\,,\nonumber
\end{align}
\item\label{Uunion:point4} for $\xi_1\in\bigcap_{1\leq k\leq N}I_k^{\complement}$ and $\xi_2\in\bigcup_{1\leq k\leq N}I_{k,\varepsilon}$:
\begin{align}
U\big(\varphi_{\{1,\dots,N\},\varepsilon}^2(\xi_1),\varphi_{\{1,\dots,N\},\varepsilon}^2(\xi_2);f\big)=U(0,1;f)\,.\nonumber
\end{align}
\end{enumerate}
\end{lemma}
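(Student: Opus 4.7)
The whole content of the lemma reduces to two observations: (a) the diagonal identity $U(\sigma,\sigma;f)=0$ for every $\sigma\in\R$, which follows immediately by substitution in~\eqref{def U} (the numerator is identically zero in $t$), and (b) the fact that on the specific regions $\bigcup_{k}I_{k,\varepsilon}$ and $\bigcap_{k}I_k^{\complement}$ the function $\varphi^2_{\{1,\dots,N\},\varepsilon}$ takes only the values $1$ and $0$, respectively. Once (a) and (b) are in place, each of the four statements follows by a single substitution. So the plan is simply to verify (b), with the disjointness of the $\overline{I_k}$ doing all the work.

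First I would pin down the two values. Under the hypothesis $\varepsilon<\tfrac12\min_k(b_k-a_k)$, each $I_{k,\varepsilon}=(a_k+\varepsilon,b_k-\varepsilon)$ is nonempty and contained in $I_k$. Since the closures $\overline{I_k}$ are pairwise disjoint, for any $\xi\in\bigcup_k I_{k,\varepsilon}$ there is a \emph{unique} index $k_0$ with $\xi\in I_{k_0,\varepsilon}\subset I_{k_0}$ and $\xi\in I_j^{\complement}$ for all $j\neq k_0$. The conditions $\varphi_{k_0,\varepsilon}\equiv 1$ on $I_{k_0,\varepsilon}$ and $\varphi_{j,\varepsilon}\equiv 0$ on $I_j^{\complement}$ then give $\varphi_{k_0,\varepsilon}^2(\xi)=1$ and $\varphi_{j,\varepsilon}^2(\xi)=0$ for $j\neq k_0$, so that $\varphi^2_{\{1,\dots,N\},\varepsilon}(\xi)=\sum_{k=1}^N\varphi_{k,\varepsilon}^2(\xi)=1$. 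If instead $\xi\in\bigcap_k I_k^{\complement}$, then every term vanishes and $\varphi^2_{\{1,\dots,N\},\varepsilon}(\xi)=0$.

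Now the four cases fall out. For point~\ref{Uunion:point1}, both arguments of $U$ equal $1$, so $U(1,1;f)=0$ by (a). For point~\ref{Uunion:point2}, both arguments equal $0$, so $U(0,0;f)=0$ by (a). Points~\ref{Uunion:point3} and~\ref{Uunion:point4} give argument pairs $(1,0)$ and $(0,1)$, respectively, and one just reads off $U(1,0;f)$ and $U(0,1;f)$.

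There is no real obstacle here: the lemma is essentially a bookkeeping statement about where the smooth cutoffs saturate or vanish, combined with the trivial diagonal zero of $U$. The only point that deserves a line of justification is the uniqueness of $k_0$ in the first bullet, which is precisely the role played by the assumption that the closures $\overline{I_k}$ are pairwise disjoint.
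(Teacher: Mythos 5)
Your proposal is correct and follows essentially the same route as the paper: identify that $\varphi^2_{\{1,\dots,N\},\varepsilon}$ equals $1$ on $\bigcup_k I_{k,\varepsilon}$ and $0$ on $\bigcap_k I_k^{\complement}$ (using disjointness of the closures), then evaluate $U$ at the resulting pairs $(1,1)$, $(0,0)$, $(1,0)$, $(0,1)$ and use $U(\sigma,\sigma;f)=0$. Your explicit remark on the uniqueness of the index $k_0$ is a slightly more careful bookkeeping than the paper's, but the substance is identical.
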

\begin{proof}
Without loss of generality, suppose that $b_\ell<a_{\ell+1}$ for $1\leq \ell\leq N-1$.

Case~\eqref{Uunion:point1}: There exist two (possibly identical) indices $1\leq \ell,m\leq N$
such that $\xi_1\in I_{\ell,\varepsilon}\subset I_\ell$ and $\xi_2\in I_{m,\varepsilon}\subset I_m$.
Therefore $\varphi_{\{1,\dots,N\},\varepsilon}(\xi_1)=\varphi_{\ell,\varepsilon}(\xi_1)$ and
$\varphi_{\{1,\dots,N\},\varepsilon}(\xi_2)=\varphi_{m,\varepsilon}(\xi_2)$ and this yields:
\begin{align}
U\big(\varphi^2_{\{1,\dots,N\},\varepsilon}(\xi_1),\varphi^2_{\{1,\dots,N\},\varepsilon}(\xi_2);f\big)
=U\big(\varphi^2_{\ell,\varepsilon}(\xi_1),\varphi^2_{m,\varepsilon}(\xi_2);f\big)=U(1,1;f)=0\,.\nonumber
\end{align}

Case~\eqref{Uunion:point2}: For this choice of the variables, $\xi_1,\xi_2\not\in I_k$ for all $k\in\{1,\dots,N\}$. Therefore:
\begin{align}
U\big(\varphi^2_{\{1,\dots,N\},\varepsilon}(\xi_1),\varphi^2_{\{1,\dots,N\},\varepsilon}(\xi_2);f\big)
=U(0,0;f)=0\,.\nonumber
\end{align}

Case~\eqref{Uunion:point3}: We combine here the cases~\eqref{Uunion:point1} and~\eqref{Uunion:point2} above. 
Then there exists an index $1\leq \ell\leq N$ such that $\xi_1\in I_{\ell,\varepsilon}\subset I_\ell$, which yields:
\begin{align}
U\big(\varphi^2_{\{1,\dots,N\},\varepsilon}(\xi_1),\varphi^2_{\{1,\dots,N\},\varepsilon}(\xi_2);f\big)
=U(\varphi^2_{\ell,\varepsilon}(\xi_1),0;f)=U(1,0;f)\,.\nonumber
\end{align}

Case~\eqref{Uunion:point4} follows analogously to case~\eqref{Uunion:point3}.
\end{proof}

We may now extend~\autoref{main thm} to the case of finitely many intervals. 

\begin{proof}[Proof of \autoref{theorem:many_intervals}] We do not need to repeat the proof for general test functions $f$ satisfying the \autoref{assump} and we can immediately turn to the smooth case. All we have to do is to prove the generalization of \autoref{trace smooth} to the present situation.

We define for fixed smooth $f$
\begin{align}\label{eq:DeltaU}
\Delta U(\xi_1,\xi_2,\varepsilon):&=
U\big(\varphi^2_{\mathcal{P}_1,\varepsilon}(\xi_1),\varphi^2_{\mathcal{P}_1,\varepsilon}(\xi_2);f\big)
+U\big(\varphi^2_{\mathcal{P}_2,\varepsilon}(\xi_1),\varphi^2_{\mathcal{P}_2,\varepsilon}(\xi_2);f\big)\nonumber\\
&-U\big(\varphi^2_{\{1,\dots,N\},\varepsilon}(\xi_1),\varphi^2_{\{1,\dots,N\},\varepsilon}(\xi_2);f\big)\,,
\end{align}
where $\varphi^2_{\mathcal{P}_j,\varepsilon}\coloneqq \sum_{k\in\mathcal P_j} \varphi^2_{k,\eps}$ for $j=1,2$. Then we will analyze the integral
\begin{align}\label{eq:DeltaU_trace_integral}
\mathcal B(\varphi^2_{\mathcal{P}_1,\eps};f) + \mathcal B(\varphi^2_{\mathcal{P}_2,\eps};f)
-\mathcal B(\varphi^2_{\{1,\dots,N\},\eps};f)=\frac{1}{8\pi^2}\int_{\R}\mathrm{d}\xi_1\int_{\R}\mathrm{d}\xi_2\,
\frac{\Delta U(\xi_1,\xi_2,\varepsilon)}{|\xi_1-\xi_2|^2}\,.
\end{align}

Since the intervals $I_k$, $1\leq k\leq N$, are pairwise disjoint, it follows that every
interval $I_i$, $i\in\mathcal{P}_1$, that belongs to subsystem $\calI_1$ is a subset of $I_j^{\complement}$,
for every $j\in\mathcal{P}_2$. This yields the inclusion
\begin{align}\label{eq:inclusion1}
\calI_{1,\varepsilon}:=\bigcup_{i\in\mathcal{P}_1}I_{i,\varepsilon}&\subset\bigcup_{i\in\mathcal{P}_1}I_i
\subset\bigcap_{j\in\mathcal{P}_2}I_j^{\complement}
=\Biggl(\bigcup_{j\in\mathcal{P}_2}I_j\Biggr)^{\complement}=:\calI_2^{\complement}\,,
\end{align}
as well as
\begin{align}\label{eq:inclusion2}
\calI_{2,\varepsilon}\subset\calI_1^{\complement}\,.
\end{align}
Moreover
\begin{align}\label{eq:inclusion3}
\calI_{1,\varepsilon},\calI_{2,\varepsilon}\subset
\bigcup_{j\in\{1,\dots,N\}}I_{j,\varepsilon}=:\calI_{\varepsilon}\,,
\end{align}
since $\mathcal{P}_1\cup\mathcal{P}_2=\{1,\dots,N\}$, and
\begin{align}\label{eq:inclusion4}
\calI^{\complement}&\subset\calI_1^{\complement},\calI_2^{\complement}\,.
\end{align}

We now divide the real line $\R$ in $4N+1$ disjoint subsets according to the partition
\begin{align}
{\mathcal{P}}_{\R}:=&\big\{
(-\infty,a_1), [a_1,a_1+\varepsilon], (a_1+\varepsilon,b_1-\varepsilon),
{[b_1-\varepsilon,b_1]},\nonumber\\
&\hspace{0.5cm} (b_1,a_2),[a_2,a_2+\varepsilon],
(a_2+\varepsilon,b_2-\varepsilon), [b_2-\varepsilon,b_2],\nonumber\\
&\hspace{0.5cm}\dots,\nonumber\\
&\hspace{0.5cm} (b_{N-1},a_N),[a_N,a_N+\varepsilon], (a_N+\varepsilon,b_N-\varepsilon), [b_N-\varepsilon,b_N],(b_N,+\infty)\big\}\,,\nonumber
\end{align}
and we divide the integration domain $\R^2$ according to the finite partition
${\mathcal{P}}_{\R^2}:=\{A\times B:A,B\in{\mathcal{P}}_{\R}\}$. We now consider the contributions
to~\eqref{eq:DeltaU_trace_integral} that arise from each element of ${\mathcal{P}}_{\R^2}$.

We repeatedly employ Lemma~\eqref{lemma:Uunion} and the set inclusions~\eqref{eq:inclusion1},
\eqref{eq:inclusion2}, \eqref{eq:inclusion3}, and~\eqref{eq:inclusion4}. Then, we see that whenever
we consider the following combinations of the variables $\xi_1,\xi_2$,
the function $\Delta U$ takes (finite) constant values independent of $\varepsilon$:
\begin{enumerate}
\item\label{listJ1J2_case1} if both $\xi_1$ and $\xi_2\in\calI_{1,\varepsilon}$, then
$\Delta U(\xi_1,\xi_2,\varepsilon)=0$;

\item\label{listJ1J2_case2} if both $\xi_1$ and $\xi_2\in\calI^{\complement}$, then $\Delta U(\xi_1,\xi_2,\varepsilon)=0$;

\item\label{listJ1J2_case3} if $\xi_1\in\calI_{1,\varepsilon}$ and
$\xi_2\in\calI_{2,\varepsilon}$, then $\Delta U(\xi_1,\xi_2,\varepsilon)=2U(1,0;f)$;

\item\label{listJ1J2_case4} if $\xi_1\in\calI_{1,\varepsilon}$ and $\xi_2\in\calI^{\complement}$, then 
$\Delta U(\xi_1,\xi_2,\varepsilon)=0$;

\item\label{listJ1J2_case5} if both $\xi_1$ and $\xi_2\in\bigcup_{k\in\mathcal{P}_1}[a_k,a_k+\varepsilon]
\cup [b_k-\varepsilon,b_k]$, then $\Delta U(\xi_1,\xi_2,\varepsilon)=0$;

\item\label{listJ1J2_case6} if $\xi_1\in\bigcup_{k\in\mathcal{P}_1}[a_k,a_k+\varepsilon]
\cup [b_k-\varepsilon,b_k]$ and $\xi_2\in\calI_{1,\varepsilon}$, then $\Delta U(\xi_1,\xi_2,\varepsilon)=0$;

\item\label{listJ1J2_case7} if $\xi_1\in\bigcup_{k\in\mathcal{P}_1}[a_k,a_k+\varepsilon]\cup [b_k-\varepsilon,b_k]$ and $\xi_2\in\calI^{\complement}$, then $\Delta U(\xi_1,\xi_2,\varepsilon)=0$.
\end{enumerate}

Identical results follows from the list above if we interchange the set indices 1 and 2.

We now focus on the contribution to~\eqref{eq:DeltaU_trace_integral} on
the remaining elements of ${\mathcal{P}}_{\R^2}$ not treated above, which, in general, explicitly depend on
the parameter $\varepsilon$. We denote a generic element of $\mathcal{P}_{\R^2}$ that falls into
this group by $A^{\varepsilon}\times B^{\varepsilon}$. The latter is characterized
by the properties that the Lebesgue measures
\begin{align}
|A^{\varepsilon}|<\infty,|B^{\varepsilon}|<\infty\quad \mbox{ for }\quad
0<\varepsilon<\varepsilon_0\coloneqq \frac{1}{2}\min_{1\leq k\leq N}\{b_k-a_k\}\,,\nonumber
\end{align}
as well as
\begin{align}\label{eq:limAB}
\lim_{\varepsilon\rightarrow 0}|A^{\varepsilon}|=0
\hspace{0.5cm}\lor\hspace{0.5cm}
\lim_{\varepsilon\rightarrow 0}|B^{\varepsilon}|=0\,.
\end{align}

We note that all elements of ${\mathcal{P}}_{\R^2}$ that are not bounded have already been
addressed in the cases~\eqref{listJ1J2_case2}, \eqref{listJ1J2_case4} and~\eqref{listJ1J2_case7} in the list above. Therefore the term $\xi_1-\xi_2$ in the denominator of~\eqref{eq:DeltaU_trace_integral}
remains always finite, and it is bounded by $|\xi_1-\xi_2|\leq b_N-a_1$.
Furthermore, $\xi_1-\xi_2$ never tends to 0\ in the $\varepsilon\rightarrow 0$ limit,
since this possibility has been ruled out by the cases~\eqref{listJ1J2_case1}, \eqref{listJ1J2_case2},
\eqref{listJ1J2_case4}, \eqref{listJ1J2_case5}, \eqref{listJ1J2_case6} and~\eqref{listJ1J2_case7}
in the list above.
A lower bound reads $|\xi_1-\xi_2|\geq\min_{1\leq k\leq N-1}\{a_{k+1}-b_k\}$, which
is always finite since the closures of the intervals are pairwise disjoint.

As a consequence, the remaining contributions to~\eqref{eq:DeltaU_trace_integral}
are given by proper integrals bounded below and above by the inequalities 
\begin{align}
D_m\int_{A^{\varepsilon}}\mathrm{d}\xi_1
\int_{B^{\varepsilon}}\mathrm{d}\xi_2\,\Delta U(\xi_1,\xi_2,\varepsilon)
&\leq\int_{A^{\varepsilon}}\mathrm{d}\xi_1\int_{B^{\varepsilon}}\mathrm{d}\xi_2\,
\frac{\Delta U(\xi_1,\xi_2,\varepsilon)}{|\xi_1-\xi_2|^2}\nonumber\\
&\leq D_M\int_{A^{\varepsilon}}\mathrm{d}\xi_1
\int_{B^{\varepsilon}}\mathrm{d}\xi_2\,\Delta U(\xi_1,\xi_2,\varepsilon)\,,\nonumber
\end{align}
where $D_m\coloneqq\frac{1}{(b_N-a_1)^2}$ and $D_M\coloneqq\frac{1}{(\min_{1\leq k\leq N-1}\{a_{k+1}-b_k\})^2}$. This yields
\begin{align}\label{eq:DeltaU_integral_vanish}
\lim_{\varepsilon\rightarrow 0}\int_{A^{\varepsilon}}\mathrm{d}\xi_1
\int_{B^{\varepsilon}}\mathrm{d}\xi_2\,
\frac{\Delta U(\xi_1,\xi_2,\varepsilon)}{|\xi_1-\xi_2|^2}
=0\,,
\end{align}
since by~\eqref{eq:limAB} either $A^{\varepsilon}$ or $B^{\varepsilon}$ is of vanishing measure.

Summarizing, the only non-zero contributions to the integral~\eqref{eq:DeltaU_trace_integral}
in the $\varepsilon\rightarrow 0$ limit come from the case~\eqref{listJ1J2_case3} in the list above,
for $\xi_1\in\calI_{1,\varepsilon}$ and
$\xi_2\in\calI_{2,\varepsilon}$ or interchanging the set indices 1 and 2.
Altogether, we get:
\begin{align}\label{D_generalized}
&\tr\D(\calI_1,\calI_2;f)=
\lim_{\varepsilon\rightarrow 0}\Big[\mathcal B(\varphi^2_{\mathcal{P}_1,\eps};f)
+\mathcal B(\varphi^2_{\mathcal{P}_2,\eps};f)-\mathcal B(\varphi^2_{\{1,\dots,N\},\eps};f)\Big]\nonumber\\
&=\frac{1}{8\pi^2}\sum_{k\in\mathcal{P}_1}\sum_{\ell\in\mathcal{P}_2}\lim_{\varepsilon\rightarrow 0}
\left(\int_{a_k+\varepsilon}^{b_k-\varepsilon}\mathrm{d}\xi_1
\int_{a_\ell+\varepsilon}^{b_\ell-\varepsilon}\mathrm{d}\xi_2\,
\frac{\Delta U(\xi_1,\xi_2,\varepsilon)}{(\xi_2-\xi_1)^2}
+\int_{a_\ell+\varepsilon}^{b_\ell-\varepsilon}\mathrm{d}\xi_1
\int_{a_k+\varepsilon}^{b_k-\varepsilon}\mathrm{d}\xi_2\,
\frac{\Delta U(\xi_1,\xi_2,\varepsilon)}{(\xi_1-\xi_2)^2}
\right)\nonumber\\
&=\frac{U(0,1;f)}{2\pi^2}\sum_{k\in\mathcal{P}_1}\sum_{\ell\in\mathcal{P}_2}\lim_{\varepsilon\rightarrow 0}
\int_{a_k+\varepsilon}^{b_k-\varepsilon}\mathrm{d}\xi_1
\int_{a_\ell+\varepsilon}^{b_\ell-\varepsilon}\mathrm{d}\xi_2\,
\frac{1}{(\xi_2-\xi_1)^2}\nonumber\\
&=\frac{U(0,1;f)}{2\pi^2}\sum_{k\in\mathcal{P}_1}\sum_{\ell\in\mathcal{P}_2}\lim_{\varepsilon\rightarrow 0}
\int_{a_k+\varepsilon}^{b_k-\varepsilon}\mathrm{d}\xi_1
\left(\frac{1}{a_\ell+\varepsilon-\xi_1}-\frac{1}{b_\ell-\varepsilon-\xi_1}\right)\nonumber\\
&=\frac{U(0,1;f)}{2\pi^2}\sum_{k\in\mathcal{P}_1}\sum_{\ell\in\mathcal{P}_2}\lim_{\varepsilon\rightarrow 0}
\ln\Big[\frac{|a_k-a_\ell||b_k-b_\ell|}
{|a_\ell-b_k+2\varepsilon||b_\ell-a_k-2\varepsilon|}\Big]\nonumber\\
&=\frac{U(0,1;f)}{2\pi^2}\sum_{k\in\mathcal{P}_1}\sum_{\ell\in\mathcal{P}_2}
\ln\Big[\frac{|a_k-a_\ell||b_k-b_\ell|}{|a_k-b_\ell||b_k-a_\ell|}\Big]\,.
\end{align}
Proceeding as in the proof of~\autoref{main thm}, the result~\eqref{D_generalized} yields the claim.
\end{proof}

\end{appendix}

\end{document}